  \providecommand\BibTeX{{%
    \normalfont B\kern-0.5em{\scshape i\kern-0.25em b}\kern-0.8em\TeX}}}
\newcommand{\TargetVPL}{\textsc{OutSystems}\xspace}
\newcommand{\ActFlowStart}{\textsc{Start}\xspace}
\newcommand{\ActFlowEnd}{\textsc{End}\xspace}
\newcommand{\ActFlowInstruction}{\textsc{Instruction}\xspace}
\newcommand{\ActFlowLoop}{\textsc{ForEach}\xspace}
\newcommand{\ActFlowIf}{\textsc{If}\xspace}
\newcommand{\ActFlowSwitch}{\textsc{Switch}\xspace}
\newcommand{\ActFlowConnector}{\textsc{Connector}\xspace}
\newcommand{\ActFlowTrue}{\textsc{True}\xspace}
\newcommand{\ActFlowFalse}{\textsc{False}\xspace}
\newcommand{\ActFlowCycle}{\textsc{Cycle}\xspace}
\newcommand{\ActFlowCondition}{\textsc{Condition}\xspace}
\newcommand{\ActFlowOtherwise}{\textsc{Otherwise}\xspace}
\newcommand{\Greedy}{\textsc{Greedy}\xspace}
\newcommand{\Lazy}{\textsc{Lazy}\xspace}
\newcommand{\DedupThenLazy}{\textsc{DedupThenLazy}\xspace}
\newcommand{\DedupThenLazyInvIndex}{\textsc{DedupThenLazy+Index}\xspace}
\newcommand{\SourcererCC}{\textsc{SourcererCC}\xspace}
\newcommand{\NiCaD}{\textsc{NiCaD}\xspace}
\newcommand{\ConQAT}{\textsc{ConQAT}\xspace}
\newcommand{\eScan}{\textsc{eScan}\xspace}
\newcommand{\ScanQAT}{\textsc{ScanQAT}\xspace}
\newcommand{\SIMONE}{\textsc{SIMONE}\xspace}
\newcommand{\CCGraph}{\textsc{CCGraph}\xspace}
\newcommand{\CCSharp}{\textsc{CCSharp}\xspace}
\newcounter{inlineenum}
\renewcommand{\theinlineenum}{\alph{inlineenum}}
\newenvironment{inlineenum}
  {\unskip\ignorespaces\setcounter{inlineenum}{0}%
   \renewcommand{\item}{\refstepcounter{inlineenum}{\textit{\theinlineenum})~}}}
  {\ignorespacesafterend}
\begin{document}

\title[Duplicated Code Pattern Mining in Visual Programming Languages]{Duplicated Code Pattern Mining in \\ Visual Programming Languages}
\titlenote{This is an extended version of a paper accepted for publication in the industrial track of the Symposium on the Foundations of Software Engineering (FSE) 2021.}

\author{Miguel Terra-Neves}
\affiliation{%
  \institution{OutSystems}
  \country{Portugal}
}
\email{miguel.neves@outsystems.com}

\author{João Nadkarni}
\affiliation{%
  \institution{OutSystems}
  \country{Portugal}
}
\email{joao.nadkarni@outsystems.com}

\author{Miguel Ventura}
\affiliation{%
  \institution{OutSystems}
  \country{Portugal}
}
\email{miguel.ventura@outsystems.com}

\author{Pedro Resende}
\affiliation{%
  \institution{OutSystems}
  \country{Portugal}
}
\email{pedro.resende@outsystems.com}

\author{Hugo Veiga}
\affiliation{%
  \institution{OutSystems}
  \country{Portugal}
}
\email{hugo.veiga@outsystems.com}

\author{António Alegria}
\affiliation{%
  \institution{OutSystems}
  \country{Portugal}
}
\email{antonio.alegria@outsystems.com}


\begin{acronym}
    \acro{VPL}{Visual Programming Language}
    \acro{LCDP}{Low-Code Development Platform}
    \acro{CNF}{Conjunctive Normal Form}
    \acro{SAT}{Boolean Satisfiability}
    \acro{MaxSAT}{Maximum Satisfiability}
    \acro{MCS}{Maximum Common Sub-graph}
    \acro{AST}{Abstract Syntax Tree}
    \acro{PDG}{Program Dependence Graph}
\end{acronym}

\begin{abstract}
    \acp{VPL}, coupled with the high-level abstractions that are commonplace in visual programming environments, enable users with less technical knowledge to become proficient programmers.
    However, the lower skill floor required by \acp{VPL} also entails that programmers are more likely to not adhere to best practices of software development, producing systems with high technical debt, and thus poor maintainability.
    Duplicated code is one important example of such technical debt.
    In fact, we observed that the amount of duplication in the \TargetVPL \ac{VPL} code bases can reach as high as $39\%$.
    
    Duplicated code detection in text-based programming languages is still an active area of research with important implications regarding software maintainability and evolution.
    However, to the best of our knowledge, the literature on duplicated code detection for \acp{VPL} is very limited.
    We propose a novel and scalable duplicated code pattern mining algorithm that leverages the visual structure of \acp{VPL} in order to not only detect duplicated code, but also highlight duplicated code patterns that explain the reported duplication.
    The performance of the proposed approach is evaluated on a wide range of real-world mobile and web applications developed using \TargetVPL.
\end{abstract}

\begin{CCSXML}
<ccs2012>
    <concept>
        <concept_id>10011007.10011074.10011099</concept_id>
        <concept_desc>Software and its engineering~Software verification and validation</concept_desc>
        <concept_significance>300</concept_significance>
    </concept>
    <concept>
        <concept_id>10011007.10011074.10011111.10011696</concept_id>
        <concept_desc>Software and its engineering~Maintaining software</concept_desc>
        <concept_significance>500</concept_significance>
    </concept>
    <concept>
        <concept_id>10003752.10003790.10003794</concept_id>
        <concept_desc>Theory of computation~Automated reasoning</concept_desc>
        <concept_significance>300</concept_significance>
    </concept>
</ccs2012>
\end{CCSXML}

\ccsdesc[500]{Software and its engineering~Maintaining software}
\ccsdesc[300]{Software and its engineering~Software verification and validation}
\ccsdesc[300]{Theory of computation~Automated reasoning}

\keywords{duplicated code, visual programming, maximum common sub-graph, maximum satisfiability}

\maketitle

\acresetall

\section{Introduction}
\label{sec:intro}

\acp{VPL} allow users to describe computational processes in terms that are easier for humans to understand than text-based programming languages.
Additionally, some \acp{VPL} provide high-level abstractions that simplify and speed-up the development process, as is the case of \TargetVPL\footnote{https://www.outsystems.com/}.
This results in a low entry barrier that enables users with less technical background to become proficient programmers.
However, such users are more likely to write code with high technical debt, since these are less familiar with best practices of software development.

In this work, we aim to aid \TargetVPL developers manage one important form of technical debt: duplicated code.
Duplicated code is commonplace in software developed using traditional text-based languages~\cite{analysis-clones,text-lang-independent} and may have severe adverse effects that result in higher maintenance costs.
For example, if one changes a duplicated code block, it is likely that the same change may need to be applied to most, if not all, duplicates of that block, thus making software harder to evolve and maintain.
Code duplication may also exacerbate bug propagation, since a bug in a given code block will also be present in its copies.
In our experiments, we observed that the amount of duplicated code in real-world \TargetVPL code bases can reach as high as $39\%$, highlighting the importance of addressing code duplication in \TargetVPL.

\begin{figure}[t]
    \centering
    \includegraphics[width=0.47\textwidth]{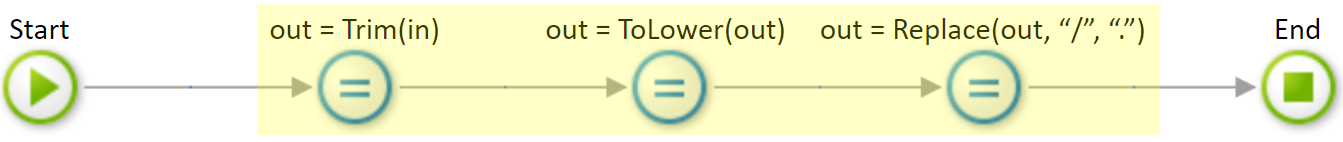}
    \caption{A logic flow that transforms a single string.}
    \label{fig:act-flow-simple}
\end{figure}

\begin{figure}[t]
    \centering
    \includegraphics[width=0.43\textwidth]{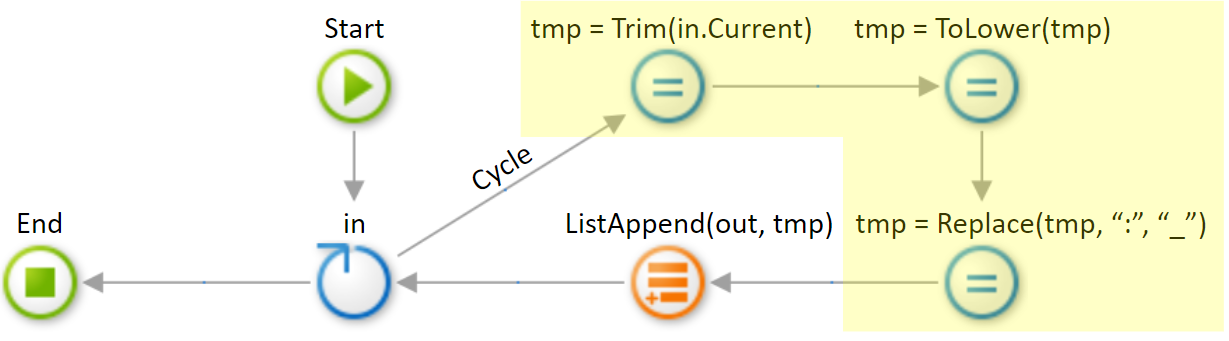}
    \caption{A logic flow that transforms a list of strings.}
    \label{fig:act-flow-foreach}
\end{figure}

In \TargetVPL, logic is implemented through logic flows.
Figure~\ref{fig:act-flow-simple} shows an example of a flow that performs some transformations over some input string.
The goal is to detect a limited form of type 3 duplicates~\cite{clone-types}, where near-misses are allowed for node expressions but the graph structure of the duplicated part must be the same.
Moreover, this duplicated structure must be visually highlighted to the user, and thus the duplicated code detector must return the mappings of flow nodes to the duplicated code pattern nodes.
These requirements stem from discussions with \TargetVPL experts, regarding an earlier version of our tool, that exploited data dependencies between nodes in order to find duplicated code with significant syntactic differences.
We concluded that such duplicates were hard to analyse and understand, thus negatively impacting the user experience.
Figure~\ref{fig:act-flow-foreach} shows an example of a flow that performs the same transformations as in Figure~\ref{fig:act-flow-simple} over some list of strings.
The respective duplicated code pattern is highlighted in yellow.
In addition to the aforementioned functional requirements, the duplicated code detector must be integrated in a tool that performs static analyses for hundreds of \TargetVPL code bases every 12 hours.
Nonetheless, the detector should process these code bases as fast as possible in order to minimize the computational resources needed to satisfy this time limit, thus optimizing operating costs.

A naive approach for detecting duplicated code in software developed using a \ac{VPL} could be to translate from the \ac{VPL} to some text-based language and then apply one of many detectors for such languages~\cite{nicad-plus,sourcerercc,siamese,deckard,ccgraph,scdetector}.
This approach suffers from a severe drawback: it sacrifices the visual structure of the \ac{VPL} code, which can be leveraged in order to provide helpful explanations of reported duplications by highlighting duplicated code patterns.
Such patterns allow the developer to understand and address the sources of code duplication more effectively.
Alternatively, some graph-based algorithms for text-based languages~\cite{graph-pdg-similar,gplag,ccsharp,ccgraph} or other \acp{VPL}~\cite{conqat,escan,scanqat,simone,opmcd} could be directly applied to \TargetVPL logic, but these typically suffer from scalability issues due to the hardness of checking sub-graph isomorphism, and the ones that do address this issue perform some approximated form of sub-graph matching, thus not guaranteeing the consistency of the graph structure.

We propose a duplicated code detector for \TargetVPL that addresses the aforementioned issues by iteratively mining \acp{MCS} of graph representations of \TargetVPL code.
Our main contributions are as follows:
\begin{inlineenum}
    \item Several complete graph pre-processing techniques that simplify the \ac{MCS} extraction task.
    We use these techniques to improve the efficiency of an \ac{MCS} algorithm based on \ac{MaxSAT}.
    \item A novel and scalable greedy algorithm for mining duplicated code patterns in \TargetVPL code bases.
    Although the focus of this work is on duplicated code, the proposed algorithm is generic and can thus be used to mine \acp{MCS} of arbitrary graph structures.
    Some techniques are also proposed in order to improve the performance of the mining algorithm.
    To the best of our knowledge, ours is the first graph-based approach that solves the scalability issue by using an inverted index~\cite{sourcerercc}.
    \item An extensive experimental evaluation on real-world \TargetVPL code bases that assess the performance of the proposed techniques.
    \item A brief evaluation regarding the severity of code duplication in real-world \TargetVPL code bases.
\end{inlineenum}

We start by providing some background on \TargetVPL, \acp{MCS} and \ac{MaxSAT} in Section~\ref{sec:bg}.
Then, the \ac{MaxSAT}-based \ac{MCS} algorithm and graph pre-processing techniques are explained in Section~\ref{sec:spe}, followed by the pattern mining algorithm and respective performance improvements in Section~\ref{sec:mining}.
Experimental results showing the merits of the proposed techniques are presented in Section~\ref{sec:eval}.
Section~\ref{sec:related-work} summarizes related work on duplicated code detection and sub-graph mining.
Limitations and design decisions are discussed in Section~\ref{sec:limits-discuss}.
Finally, Section~\ref{sec:conclusion} concludes this paper.

\section{Background}
\label{sec:bg}

In this section, we introduce the necessary background.
Logic flows are explained in Section~\ref{sec:bg-act-flows}, followed by a definition of \ac{MCS} in Section~\ref{sec:bg-mcs} and an explanation of \ac{MaxSAT} in Section~\ref{sec:bg-maxsat}.

\subsection{Logic Flows}
\label{sec:bg-act-flows}

A logic flow is a directed weakly connected graph $G = (V, E)$ where each node in $V$ has one of the following types: \ActFlowStart, \ActFlowEnd, \ActFlowInstruction, \ActFlowLoop, \ActFlowIf~or \ActFlowSwitch.
Additionally, each edge in $E$ can be of type \ActFlowConnector, \ActFlowTrue, \ActFlowFalse, \ActFlowCycle, \ActFlowCondition~or \ActFlowOtherwise.
We refer to the outgoing edges of a node as branches.
$G$ satisfies the following properties:
\begin{itemize}
    \item $G$ does not contain self-loops or parallel edges.
    \item $V$ contains only one \ActFlowStart~node $v$, and no edge $(u', v') \in E$ exists such that $v = v'$.
    \item Given an \ActFlowEnd~node $v \in V$, no branch exists in $E$ for $v$ and there exists at least one edge $(u', v') \in E$ such that $v = v'$.
    \item A \ActFlowStart~or \ActFlowInstruction~node $u \in V$ has exactly one \ActFlowConnector~branch $(u, v) \in E$.
    \item An \ActFlowIf~node $u \in V$ has exactly one \ActFlowTrue~branch $(u, v) \in E$ and one \ActFlowFalse~branch $(u, v') \in E$.
    \item A \ActFlowLoop~node $u \in V$ has exactly one \ActFlowConnector~branch $(u, v) \in E$ and one \ActFlowCycle~branch $(u, v') \in E$ such that there exists a path from $u$ to itself through $(u, v')$.
    \item A \ActFlowSwitch~node $u \in V$ has at least one \ActFlowCondition~branch $(u, v) \in E$ and exactly one \ActFlowOtherwise~branch $(u, v') \in E$.
\end{itemize}

The logic flow is akin to the control flow graph of a program written in a traditional programming language.
Its execution begins at its \ActFlowStart~node and terminates at one of its \ActFlowEnd~nodes.
Moreover, depending on their types, the nodes/edges can have different attributes.
For example, an \ActFlowIf node contains a Boolean expression which dictates if the execution is to continue through its \ActFlowTrue~(\ActFlowCycle) or \ActFlowFalse~(\ActFlowConnector) branch.
Similarly, a \ActFlowCondition~branch of a \ActFlowSwitch~node contains a Boolean expression that, if evaluated to true, then the execution continues through that branch.
\ActFlowCondition~branches also have a pre-specified order of evaluation.
If none of those branches evaluate to true, then execution resumes through the \ActFlowOtherwise~branch.
A \ActFlowLoop~node contains a reference to a variable of an iterable type (e.g. list).
\ActFlowInstruction~nodes can be of various kinds, such as variable assignments, database accesses, calls to other logic flows, among others.
Note that, just like functions/methods in text-based languages, logic flows can have input and output parameters.

\subsection{Maximum Common Sub-graph}
\label{sec:bg-mcs}

Logic flows are graphs, thus a duplicated code pattern is a common sub-graph that occurs across multiple flows.
Naturally, the largest common pattern in those flows corresponds to an \ac{MCS}.
Let $G_1 = (V_1, E_1)$ and $G_2 = (V_2, E_2)$ be a pair of graphs with labeled nodes/edges.
For the purpose of this work, we assume that graphs are directed by default.
We use $L(v)$ to denote the label of some node $v$.
For example, assuming $v$ is a node of a logic flow, $L(v)$ can be something as simple as the node's type (e.g. \ActFlowInstruction).
Given some label $\ell$, we use $V_i^\ell$ to denote the subset of nodes $v \in V_i$ such that $L(v) = \ell$.
Analogously, we use $L(u, v)$ to denote the label of some edge $(u, v)$ and $E_i^\ell$ to denote the subset of edges $(u, v) \in E_i$ such that $L(u, v) = \ell$.
For convenience, we use $L_{comb}(u, v) = (L(u), L(u, v), L(v))$ to denote the combined label of $(u, v)$ and $E_i^{comb / \ell}$ to denote the subset of edges $(u, v) \in E_i$ such that $L_{comb}(u, v) = \ell$.
Also, we abuse notation and use $L_{comb}(E_i)$ to denote the set of combined labels that occur in $E_i$.

A graph $G_C = (V_C, E_C)$ is a common sub-graph of $G_1$ and $G_2$ if there exist mappings $f_1: V_C \rightarrow V_1$ and $f_2: V_C \rightarrow V_2$ such that $L(v) = L(f_1(v)) = L(f_2(v))$ for all $v \in V_C$ and $L(u, v) = L(f_1(u), f_1(v)) = L(f_2(u), f_2(v))$ for all $(u, v) \in E_C$.
$G_C$ is said to be an \ac{MCS} if and only if no common sub-graph $G_C' = (V_C', E_C')$ of $G_1$ and $G_2$ exists containing more nodes or edges than $G_C$, i.e. such that $\left| V_C' \right| > \left| V_C \right|$ or $\left| E_C' \right| > \left| E_C \right|$.
For convenience, given a node $v \in V_i$, we abuse notation and use $v \in V_C$ to denote that there exists $v' \in V_C$ such that $v'$ is mapped to $v$, i.e. $f_i(v') = v$.
Analogously, given $(u, v) \in E_i$, we use $(u, v) \in E_C$ to denote that there exists $(u', v') \in E_C$ such that $f_i(u') = u$ and $f_i(v') = v$.

\subsection{Maximum Satisfiability}
\label{sec:bg-maxsat}

\ac{MCS} computation is well-known to be an NP-hard problem.
In recent years, \ac{MaxSAT} solvers have become a very effective tool for solving such hard combinatorial optimization problems~\cite{maxsat-survey,maxsat-ols,maxsat-maxhs,maxsat-res-comms}, thus our approach reduces the \ac{MCS} problem to \ac{MaxSAT}.

Let $X$ be a set of Boolean variables.
A literal $l$ is either a variable $x \in X$ or its negation $\neg x$.
A clause $c$ is a disjunction of literals $(l_1 \vee l_2 \vee \dots \vee l_k)$.
If a clause contains a single literal, then it is said to be a unit clause.
A propositional logic formula in \ac{CNF} $\phi$ is a conjunction of clauses $c_1 \wedge c_2 \wedge \dots \wedge c_n$.
A literal $x$ ($\neg x$) is said to be satisfied if and only if $x$ is assigned the Boolean value 1 (0).
A clause is satisfied if and only if at least one of its literals is satisfied.
A \ac{CNF} formula is satisfied if and only if all of its clauses are satisfied.
Given a \ac{CNF} formula $\phi$, the \ac{SAT} problem consists of deciding if there exists an assignment $\alpha: X \rightarrow \lbrace 0, 1 \rbrace$ of Boolean values to the variables of $X$ that satisfies $\phi$.
If $\alpha$ exists, then $\alpha$ is said to be a model of $\phi$.
Otherwise, $\phi$ is said to be unsatisfiable.

\ac{MaxSAT}~\cite{maxsat} is a generalization of \ac{SAT} where, in addition to the CNF formula $\phi$ (referred to as the hard formula), we have a set $S$ of soft clauses.
The goal is to compute a model $\alpha$ of $\phi$ that minimizes the number of clauses in $S$ not satisfied by $\alpha$.

\begin{example}
Consider the \ac{MaxSAT} instance with hard formula $\phi = \left( \neg x_1 \vee x_2 \right)$ and soft clauses $S = \lbrace \left( x_1 \right), \left( \neg x_2 \right) \rbrace$.
The assignment $\lbrace \left( x_1, 1 \right), \left( x_2, 0 \right) \rbrace$ is not a model of $\phi$.
On the other hand, the assignment $\lbrace \left( x_1, 0 \right), \left( x_2, 0 \right) \rbrace$ is a model of $\phi$ that satisfies the soft clause $\left( \neg x_2 \right)$.
Additionally, it is an optimal model since it is not possible to satisfy more than 1 soft clause for this instance.
\end{example}

\section{Single Pattern Extraction}
\label{sec:spe}

In order to mine duplicated code patterns, one must be able to extract a maximal pattern from a pair of logic flows $G_1 = (V_1, E_1)$ and $G_2 = (V_2, E_2)$.
The maximal pattern is an \ac{MCS} of $G_1$ and $G_2$.
Our approach reduces the problem of finding such an \ac{MCS} to an instance of \ac{MaxSAT}.
The \ac{MaxSAT} encoding is presented in Section~\ref{sec:spe-encoding}.
Section~\ref{sec:spe-preprocess} follows with an explanation of several pre-processing rules used to simplify $G_1$ and $G_2$ before building the encoding.

\subsection{MaxSAT Encoding}
\label{sec:spe-encoding}

Our \ac{MaxSAT} formulation is inspired by previous work on malware signature synthesis using \ac{MaxSAT}~\cite{malware-sig-synth}.
It extracts an \ac{MCS} by mapping the nodes of $G_2$ into the nodes of $G_1$.
The encoding is explained through a running a example in which we consider $G_1$ and $G_2$ to be the logic flows in Figures~\ref{fig:act-flow-simple} and~\ref{fig:act-flow-foreach} respectively.
Note that some mappings are not valid, such as mapping an \ActFlowIf~node to an \ActFlowInstruction.
In order to specify such constraints, node and edge labels are used.
In the example, the node/edge types are considered as labels for ease of explanation.
Additionally, the \ActFlowStart and \ActFlowEnd nodes must appear in every flow, and thus cannot be refactored to a separate flow.
Therefore, such nodes are discarded beforehand.

The following three sets of Boolean variables are considered:
\begin{itemize}
    \item \textbf{Inclusion variables.} For each node $v \in V_1$, a variable $o_v$ is introduced to encode if $v$ is part of the \ac{MCS} (i.e. $o_v = 1$) or not (i.e. $o_v = 0$).
    In the running example, three inclusion variable are needed: $o_{trim}$, $o_{low}$ and $o_{rep}$.
    \item \textbf{Mapping variables.} For each node pair $v, v'$ such that $v \in V_1$ and $v' \in V_2$, a variable $f_{v, v'}$ is introduced to encode if $v'$ is mapped to $v$ (i.e. $f_{v, v'} = 1$) or not (i.e. $f_{v, v'} = 0$).
    In the example, five variables are needed for each node of $G_1$. For the ToLower node, these variables are: $f_{low, for}$, $f_{low, trim}$, $f_{low, low}$, $f_{low, rep}$ and $f_{low, list}$.
    \item \textbf{Control-flow variables.} For each edge $(u, v) \in E_1$, a variable $c_{u, v}$ is introduced to encode if $(u, v)$ is part of the \ac{MCS} (i.e. $c_{u, v} = 1$) or not (i.e. $c_{u, v} = 0$).
    In the example, two control-flow variables are needed: $c_{trim, low}$ and $c_{low, rep}$.
\end{itemize}

For ease of explanation, some constraints are shown as at-most-1 constraints, i.e. of the form $\sum_i l_i \le 1$, instead of clauses.
Note that these are easily convertible to \ac{CNF} by introducing the clause $(\neg l_i \vee \neg l_j)$ for each pair $i, j$ such that $i \neq j$.
The hard formula contains the following constraints:
\begin{itemize}
    \item \textbf{Inclusion clauses.} A node $v \in V_1$ is in the \ac{MCS} if and only if at least one node in $V_2$ is mapped to $v$. If $v$ is the ToLower node, we have:
    \begin{multline}
        (\neg o_{low} \vee f_{low, for} \vee \dots \vee f_{low, list}) \wedge \\ (o_{low} \vee \neg f_{low, for}) \wedge \dots \wedge (o_{low} \vee \neg f_{low, list}) \text{.}
    \end{multline}
    \item \textbf{One-to-one clauses.} At most one node in $V_2$ can be mapped to each node $v \in V_1$. Assuming that $v$ is the ToLower node:
    \begin{equation}
        f_{low, for} + f_{low, trim} + f_{low, low} + f_{low, rep} + f_{low, list} \le 1 \text{.}
    \end{equation}
    \item \textbf{Function property clauses.} Each node $v' \in V_2$ cannot be mapped to more than one node in $V_1$. If $v'$ is the \ActFlowLoop node, we have:
    \begin{equation}
        f_{trim, for} + f_{low, for} + f_{rep, for} \le 1 \text{.} 
    \end{equation}
    \item \textbf{Label consistency clauses.} A node $v' \in V_2$ cannot be mapped to $v \in V_1$ if $v$ and $v'$ do not share the same label:
    \begin{equation}
        (\neg f_{trim, for}) \wedge (\neg f_{trim, list}) \wedge \dots \wedge (\neg f_{rep, for}) \wedge (\neg f_{rep, list}) \text{.}
    \end{equation}
    \item \textbf{Control-flow consistency clauses.} Consider some edge $(u, v) \in E_1$ and a pair of nodes $u', v' \in V_2$. If $u'$ and $v'$ are mapped to $u$ and $v$ respectively, and $(u', v')$ is not an edge of $G_2$ or does not share the same label as $(u, v)$, then $(u, v)$ cannot be in the \ac{MCS}.
    For example, if $u$ and $v$ are the ToLower and Replace nodes of $G_1$ respectively, since an edge does not exist between the ToLower and Trim of $G_2$, the following constraint is necessary:
    \begin{equation}
        (\neg f_{low, low} \vee \neg f_{rep, trim} \vee \neg c_{low, rep}) \text{.}
    \end{equation}
    On the other hand, the same constraint is not added when $u'$ and $v'$ are the Replace and ListAppend nodes of $G_2$ respectively, since the edge exists in $G_2$ and shares the same label as the edge between the ToLower and Replace of $G_1$.
    \item \textbf{No spurious edge clauses.} An edge $(u, v) \in E_1$ can be part of the \ac{MCS} only if both $u$ and $v$ are as well.
    If $u$ and $v$ are the ToLower and Replace nodes:
    \begin{equation}
        (\neg c_{trim, low} \vee o_{trim}) \wedge (\neg c_{trim, low} \vee o_{low}) \text{.}
    \end{equation}
    \item \textbf{No isolate node clauses.} A node $v \in V_1$ can be part of the \ac{MCS} only if at least one of its incoming/outgoing edges is in the \ac{MCS}. Assuming that $v$ is the ToLower node:
    \begin{equation}
        (\neg o_{low} \vee c_{trim, low} \vee c_{low, rep}) \text{.}
    \end{equation}
\end{itemize}

Note that the definition of \ac{MCS} provided in Section~\ref{sec:bg-mcs} does not forbid the inclusion of isolate nodes.
However, this is forbidden by the hard formula because such nodes are not desirable for the duplicated code pattern mining use case.

The optimization goal is to maximize the number of edges in the \ac{MCS}, which is given by the following set of soft clauses:
\begin{equation}\label{eq:spe-encoding-soft}
    \left\{ (c_{trim, low}), (c_{low, rep}) \right\} \text{.}
\end{equation}

Although the encoding described here focuses on extracting an \ac{MCS} of a pair of graphs, it can be easily extended to $k$ graphs by considering $k-2$ extra sets of mapping variables and adding the respective constraints to the hard formula.

\subsection{Graph Pre-processing}
\label{sec:spe-preprocess}

The pattern mining algorithms described in Section~\ref{sec:mining} rely on solving several \ac{MCS} instances.
Therefore, \ac{MCS} extraction must be as efficient as possible, since its performance strongly impacts the performance of the pattern miner.
\ac{MCS} instances can become hard to solve as the size of $G_1$ and $G_2$ increases.
For this reason, several pre-processing rules were implemented in order to reduce the size of $G_1$ and $G_2$.
The first rule discards edges with combined labels that do not occur in both $E_1$ and $E_2$, since it is impossible for an edge to be in the pattern if it does not occur in both graphs.
For the running example from Figures~\ref{fig:act-flow-simple} and~\ref{fig:act-flow-foreach}, this corresponds to discarding the edges that contain the \ActFlowLoop and ListAppend nodes.

\begin{proposition}\label{prop:spe-preprocess-first-rule}
    Given a pair of graphs $G_1 = (V_1, E_1)$ and $G_2 = (V_2, E_2)$, and an edge $(u, v) \in E_1$ such that $L_{comb}(u, v) \notin L_{comb}(E_2)$, then an \ac{MCS} of $G_1$ and $G_2$ is also an \ac{MCS} of $G_1'$ and $G_2$, where $V_1' = V_1$ and $E_1' = E_1 \setminus \lbrace (u, v) \rbrace$, and vice-versa.
\end{proposition}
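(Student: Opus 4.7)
The plan is to prove the stronger statement that the set of common sub-graphs of $G_1$ and $G_2$ coincides with the set of common sub-graphs of $G_1'$ and $G_2$; once that equality is established, the claim about MCSs follows immediately from the definition in Section~\ref{sec:bg-mcs}, since maximality is determined solely by node and edge counts of common sub-graphs.

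The forward inclusion is straightforward: any common sub-graph $G_C$ of $G_1'$ and $G_2$ is also a common sub-graph of $G_1$ and $G_2$, because $E_1' \subseteq E_1$ and $V_1' = V_1$, so the same mappings $f_1, f_2$ witness the common sub-graph property in the larger graph pair. For the reverse inclusion, I would take a common sub-graph $G_C = (V_C, E_C)$ of $G_1$ and $G_2$ with witnessing mappings $f_1, f_2$, and argue that no edge $(x, y) \in E_C$ can satisfy $f_1(x) = u$ and $f_1(y) = v$. Indeed, such an edge would force $L(x) = L(u)$, $L(y) = L(v)$, and $L(x, y) = L(u, v)$, so $L_{comb}(x, y) = L_{comb}(u, v)$; but $f_2$ must also map $(x, y)$ to some edge $(f_2(x), f_2(y)) \in E_2$ with $L_{comb}(f_2(x), f_2(y)) = L_{comb}(x, y) = L_{comb}(u, v)$, contradicting the hypothesis $L_{comb}(u, v) \notin L_{comb}(E_2)$. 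Hence the same $f_1$ (now viewed as a mapping into $V_1' = V_1$) together with $f_2$ certifies that $G_C$ is a common sub-graph of $G_1'$ and $G_2$.

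Given the equality of the two families of common sub-graphs, the equivalence of their maxima is immediate: a sub-graph maximizing $|V_C|$ and $|E_C|$ over one family also does so over the other. This yields both directions of the claim. The main subtlety, and the only step that requires care, is the argument that no edge in $E_C$ can have image $(u, v)$ under $f_1$; everything else is purely set-theoretic bookkeeping about sub-graph mappings.
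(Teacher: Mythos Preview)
Your proof is correct and uses essentially the same key observation as the paper's: no common sub-graph can contain an edge mapping to $(u,v)$ under $f_1$, because the matching edge under $f_2$ would need combined label $L_{comb}(u,v)$ in $E_2$, which is assumed absent. The only difference is framing: you establish equality of the full families of common sub-graphs and then deduce the MCS equivalence, whereas the paper argues each direction directly at the MCS level by contradiction; your version is slightly cleaner but the underlying argument is identical.
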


\begin{proof}
    Let $G_C = (V_C, E_C)$ be an \ac{MCS} of $G_1$ and $G_2$.
    If $G_C$ is not an \ac{MCS} of $G_1'$ and $G_2$, then $(u, v) \in E_C$ since it is the only edge of $E_1$ not in $E_1'$.
    However, because $L_{comb}(u, v) \notin L_{comb}(E_2)$, no edge $(u', v') \in E_2$ exists such that $L(u) = L(u')$, $L(v) = L(v')$ and $L(u, v) = L(u', v')$, and thus, by definition, $(u, v)$ cannot be in $E_C$, resulting in a contradiction.
    On the other hand, if $G_C$ is an \ac{MCS} of $G_1'$ and $G_2$ but not of $G_1$ and $G_2$, then there must exist edges $(p, q) \in E_1 \setminus E_1'$ and $(p', q') \in E_2$ such that $L_{comb}(p, q) = L_{comb}(p', q')$.
    By definition, $E_1 \setminus E_1' = \lbrace (u, v) \rbrace$, thus $L_{comb}(p, q) = L_{comb}(u, v)$ which implies that $L_{comb}(p, q) \notin L_{comb}(E_2)$, hence $(p', q')$ does not exist.
\end{proof}

The application of Proposition~\ref{prop:spe-preprocess-first-rule} may cause either $G_1$ or $G_2$ to become disconnected.
More specifically, some edges may become what we refer to as orphan edges, i.e. an edge $(u, v) \in E_i$ such that $u$ and $v$ do not appear in any edges of $E_i$ other than $(u, v)$.
In other words, no other edge $(p, q) \in E_i$ exists such that ${p \in \lbrace u, v \rbrace}$ or ${q \in \lbrace u, v \rbrace}$.
Let $O_i^{comb / \ell}$ denote the subset of orphan edges in $E_i^{comb / \ell}$.
If $\left| O_1^{comb / \ell} \right| > \left| E_2^{comb / \ell} \right|$, then $G_1$ is said to contain an excess of orphan edges with combined label $\ell$.
The second rule discards orphan edges responsible for excesses in $G_1$ and $G_2$ until this is no longer the case.
It is safe to do this because the \ac{MCS} can contain at most $\left| E_2^{comb / \ell} \right|$ edges with combined label $\ell$.

\begin{proposition}\label{prop:spe-preprocess-second-rule}
    Given a pair of graphs $G_1 = (V_1, E_1)$ and $G_2 = (V_2, E_2)$, and an orphan edge $(u, v) \in E_1$, if $G_1$ contains an excess of orphan edges with combined label $L_{comb}(u, v)$, then there exists an \ac{MCS} $G_C = (V_C, E_C)$ of $G_1$ and $G_2$ such that $(u, v) \notin E_C$.
\end{proposition}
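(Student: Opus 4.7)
The plan is to start with an arbitrary \ac{MCS} $G_C = (V_C, E_C)$ of $G_1$ and $G_2$; if $(u,v) \notin E_C$ we are done, and otherwise I would construct a new \ac{MCS} $G_C'$ of the same size that excludes $(u,v)$. Let $\ell = L_{comb}(u,v)$. Throughout I use the fact that $f_1$ and $f_2$ are injective, which is implicit in the \ac{MCS} definition (otherwise one could trivially fold many nodes of $V_C$ onto a single image and inflate the common sub-graph without bound). First, by a pigeonhole argument, I would exhibit an orphan edge $(p,q) \in O_1^{comb/\ell}$ that is not the $f_1$-image of any edge of $E_C$: by injectivity of $f_1$, each orphan edge of $E_1$ has at most one pre-image in $E_C$, and those pre-images $f_2$-map injectively into $E_2^{comb/\ell}$, so at most $|E_2^{comb/\ell}|$ orphan edges of $O_1^{comb/\ell}$ can have pre-images; since $|O_1^{comb/\ell}| > |E_2^{comb/\ell}|$, at least one does not.

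Next I would perform a swap of mappings. Let $u^*, v^* \in V_C$ be the (unique) pre-images of $u, v$ under $f_1$, so $(u^*, v^*) \in E_C$ pre-images $(u, v)$. Because $(u,v)$ is orphan in $G_1$, both $u^*$ and $v^*$ are incident in $E_C$ only to the edge $(u^*, v^*)$. Symmetrically, if some $p^* \in V_C$ satisfies $f_1(p^*) = p$, then $p^*$ must be isolate in $E_C$: any edge of $E_C$ incident to $p^*$ would $f_1$-map to an edge of $E_1$ incident to $p$, but $(p,q)$ is the only such edge and has no pre-image in $E_C$ by choice; the analogous statement holds for $q^*$. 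Moreover, distinct orphan edges have disjoint endpoints, so $\lbrace u,v \rbrace \cap \lbrace p,q \rbrace = \emptyset$, ensuring $u^*, v^*, p^*, q^*$ are pairwise distinct whenever defined.

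I would then define $f_1'$ to agree with $f_1$ except for setting $f_1'(u^*) = p$ and $f_1'(v^*) = q$, together with the exchanges $f_1'(p^*) = u$ and $f_1'(q^*) = v$ whenever those pre-images exist. Taking $V_C' = V_C$, $E_C' = E_C$, and $f_2' = f_2$, the triple $(G_C', f_1', f_2')$ remains a common sub-graph of $G_1$ and $G_2$: node and edge labels are preserved because $L_{comb}(p,q) = L_{comb}(u,v)$; injectivity of $f_1'$ survives the swap; the edge $(u^*, v^*) \in E_C'$ now $f_1'$-maps to $(p,q) \in E_1$ with the correct combined label; and no other edge of $E_C'$ is disturbed, since $u^*, v^*$ are incident only to $(u^*, v^*)$ while $p^*, q^*$ carry no incident edges. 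As $|V_C'| = |V_C|$ and $|E_C'| = |E_C|$, $G_C'$ is still an \ac{MCS}, and by construction no edge of $E_C'$ $f_1'$-pre-images $(u,v)$.

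The main obstacle is the case where $p$ or $q$ already lies in the image of $f_1$, since a naive rerouting of $u^*, v^*$ would violate injectivity of $f_1'$. The swap argument above handles this cleanly by exchanging roles with the (necessarily isolate) pre-images of $p$ and $q$, which is legitimate precisely because the orphan structure of $(p,q)$ forces those pre-images to carry no incident edges in $E_C$.
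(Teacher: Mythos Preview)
Your proof is correct and follows essentially the same approach as the paper: take an \ac{MCS} containing $(u,v)$, use pigeonhole against $|E_2^{comb/\ell}|$ to find an unused orphan edge $(p,q)$ with the same combined label, and redirect the mapping of the pre-image of $(u,v)$ onto $(p,q)$. Your treatment is in fact more careful than the paper's, since you explicitly handle the case where $p$ or $q$ already lie in the image of $f_1$ by swapping with their (necessarily isolate) pre-images, whereas the paper's redirection tacitly assumes this situation does not arise.
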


\begin{proof}
    Let $G_C' = (V_C', E_C')$ be an \ac{MCS} of $G_1$ and $G_2$ such that $(u, v) \in E_C'$, and let $(p, q) \in E_C'$ be the edge of $G_C'$ such that $f_1'(p) = u$ and $f_1'(q) = v$.
    Because $(u, v)$ is an orphan edge, by definition $(p, q)$ must also be an orphan edge.
    Moreover, since $(u, v)$ is in excess, we have that $\left| O_1^{comb / L_{comb}(u, v)} \right| > \left| E_2^{comb / L_{comb}(u, v)} \right| \ge \left| E_C'^{comb / L_{comb}(u, v)} \right|$, and thus there exists at least one edge $(u', v') \in O_1^{comb / L_{comb}(u, v)}$ such that $(u', v') \notin E_C'$.
    Consequently, there exists a mapping $f_1$ identical to $f_1'$, with the exception that $f_1(p) = u'$ and $f_1(q) = v'$, thus $G_C$ exists.
\end{proof}

The aforementioned rules may also cause some of the connected components of some $G_i$ to become simple paths, i.e. a subgraph of $G_i$ with node set $V_S = \lbrace v_1, v_2, \dots, v_n \rbrace$ such that $(v_j, v_{j+1}) \in E_i$, for all $1 \le j < n$, and no other edge exists in $E_i$ with nodes from $V_S$.
Assuming $i = 1$, let $P_1^{(L_{comb}(v_1, v_2), \dots, L_{comb}(v_{n-1}, v_n))}$ denote the set of all simple path components $V_S' = \lbrace v_1', v_2', \dots, v_n' \rbrace$ in $G_1$ such that $L_{comb}(v_j, v_{j+1}) = L_{comb}(v_j', v_{j+1}')$ for all $1 \le j < n$.
The third rule discards $v_1$ ($v_n$) if there exist more components in $P_1^{(L_{comb}(v_1, v_2), \dots, L_{comb}(v_{n-1}, v_n))}$ than nodes in $V_2^{L(v_1)}$ ($V_2^{L(v_n)}$).
Similarly to Proposition~\ref{prop:spe-preprocess-second-rule}, this is allowed because the \ac{MCS} can contain at most $\left| V_2^{L(v_1)} \right|$ ($\left| V_2^{L(v_n)} \right|$) nodes with label $L(v_1)$ ($L(v_n)$).
We prove the correctness of this rule just for $v_1$, but note that the same reasoning applies for $v_n$.

\begin{proposition}\label{prop:spe-preprocess-third-rule}
    Given a pair of graphs $G_1 = (V_1, E_1)$ and $G_2 = (V_2, E_2)$ such that $G_1$ contains a simple path component $V_S =  \lbrace v_1, v_2, \dots, v_n \rbrace$, if $\left| P_1^{(L_{comb}(v_1, v_2), \dots, L_{comb}(v_{n-1}, v_n))} \right| > \left| V_2^{L(v_1)} \right|$, then there exists an \ac{MCS} $G_C = (V_C, E_C)$ of $G_1$ and $G_2$ such that $v_1 \notin V_C$.
\end{proposition}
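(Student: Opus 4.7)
The plan is to mirror the proof of Proposition~\ref{prop:spe-preprocess-second-rule}, lifted from the level of individual orphan edges to the level of whole simple path components. Let $G_C' = (V_C', E_C')$ be an arbitrary \ac{MCS} of $G_1$ and $G_2$ with mapping $f_1'$ into $V_1$; if $v_1 \notin f_1'(V_C')$ we are done, so assume $v_1 \in f_1'(V_C')$. Enumerate the elements of $P_1^{(L_{comb}(v_1, v_2), \dots, L_{comb}(v_{n-1}, v_n))}$ as $V_S = V_S^{(1)}, V_S^{(2)}, \dots, V_S^{(m)}$, with $V_S^{(i)} = \{v_1^{(i)}, \dots, v_n^{(i)}\}$ and $m > \left| V_2^{L(v_1)} \right|$ by hypothesis.

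The first step is a pigeonhole argument. The one-to-one property of the mapping combined with label consistency forces $\left| f_1'(V_C') \cap V_1^{L(v_1)} \right| \leq \left| V_2^{L(v_1)} \right|$, so at most $\left| V_2^{L(v_1)} \right| < m$ of the starting nodes $v_1^{(1)}, \dots, v_1^{(m)}$ can lie in the image. Pick any $j$ such that $v_1^{(j)} \notin f_1'(V_C')$.

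The main step is a component-level swap between the pre-images of $V_S$ and $V_S^{(j)}$. Set $A = \{p \in V_C' : f_1'(p) \in V_S\}$ and $B = \{p \in V_C' : f_1'(p) \in V_S^{(j)}\}$. Since $V_S$ and $V_S^{(j)}$ are simple paths with identical sequences of node and combined edge labels, the position-preserving bijection $\pi : V_S \to V_S^{(j)}$ defined by $v_i \mapsto v_i^{(j)}$ is a label-preserving graph isomorphism. I would define $f_1$ to agree with $f_1'$ outside $A \cup B$, to equal $\pi \circ f_1'$ on $A$, and to equal $\pi^{-1} \circ f_1'$ on $B$, and then verify that $(V_C', E_C')$ equipped with $f_1$ is still a valid common sub-graph of the same size as $G_C'$ and therefore an \ac{MCS}. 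Label and edge consistency are inherited from $\pi$ being label-preserving; injectivity survives because $\pi$ bijects two disjoint regions of $V_1$ while the rest of $f_1$ is unchanged. By construction, $v_1 = v_1^{(1)}$ is no longer in the image, establishing $v_1 \notin V_C$.

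The main obstacle is showing that the swap really is ``local'', i.e.\ that it does not violate any constraint involving nodes outside $A \cup B$. The key observation is that $V_S$ and $V_S^{(j)}$ are whole connected components of $G_1$, so any edge of $E_C'$ incident to $A$ (respectively $B$) must, by the control-flow consistency clauses, have both endpoints in $A$ (respectively $B$); consequently $A$ and $B$ are unions of connected components of $G_C'$, and the redefinition of the mapping on $A \cup B$ cannot break any edge constraint on the remainder of $V_C'$.
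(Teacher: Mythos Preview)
Your proof is correct and follows the same strategy as the paper: a pigeonhole argument to locate a simple path component $V_S^{(j)}$ whose first node $v_1^{(j)}$ is unused, followed by a swap of the mappings between $V_S$ and $V_S^{(j)}$. The only difference is that the paper shortcuts the swap via a ``without loss of generality'' assumption on the shape of the preimages of $V_S$ and $V_S'$ inside $G_C'$, whereas you handle arbitrary preimages $A$ and $B$ explicitly through the position-preserving bijection $\pi$; your formulation is slightly more careful but conceptually identical.
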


\begin{proof}
    Let $G_C' = (V_C', E_C')$ be an MCS of $G_1$ and $G_2$ such that $v_1 \in V_C'$.
    We have that $\left| P_1^{(L_{comb}(v_1, v_2), \dots, L_{comb}(v_{n-1}, v_n))} \right| > \left| V_2^{L(v_1)} \right| \ge \left| V_C'^{L(v_1)} \right|$, thus there exists at least one simple path component $V_S' = \lbrace v_1', v_2', \dots, v_n' \rbrace$ in $P_1^{(L_{comb}(v_1, v_2), \dots, L_{comb}(v_{n-1}, v_n))}$ such that $v_1' \notin V_C'$.
    Without loss of generality, assume that $G_C'$ contains two simple path components $U_S = \lbrace u_1, u_2, \dots, u_n \rbrace$ and $U_S' = \lbrace u_2', u_3', \dots, u_n' \rbrace$ such that $f_1'(u_j) = v_j$ for all $1 \le j \le n$ and $f_1'(u_k') = v_k'$ for all $2 \le k \le n$.
    By definition, we have that $L_{comb}(u_j, u_{j+1}) = L_{comb}(v_j, v_{j+1}) = L_{comb}(v_j', v_{j+1}') = L_{comb}(u_j', u_{j+1}')$ for all $2 \le j < n$ and $L_{comb}(u_1, u_2) = L_{comb}(v_1, v_2) = L_{comb}(v_1', v_2')$.
    Therefore, there exists a mapping $f_1$ identical to $f_1'$, with the exception that $f_1(u_j) = f_1'(u_j')$ and $f_1(u_j') = f_1'(u_j)$ for all $2 \le j \le n$, and $f_1(u_1) = v_1'$, thus $G_C$ exists.
\end{proof}

The three rules are repeatedly used to simplify $G_1$ and $G_2$ until a fixpoint is reached, i.e. all the rules are no longer applicable.
At each iteration, isolate nodes are also discarded since our \ac{MaxSAT} encoding forbids the inclusion of such nodes in the \ac{MCS}, and doing so may enable further simplifications through Proposition~\ref{prop:spe-preprocess-third-rule}.

\section{Pattern Mining}
\label{sec:mining}

This section focuses on the problem of mining duplicated code patterns from a set of graphs $G_1, G_2, \dots, G_n$.
We start by describing a greedy pattern mining algorithm in Section~\ref{sec:mining-greedy}, followed by its lazy version in Section~\ref{sec:mining-lazy}.
In Section~\ref{sec:mining-isomorphic}, we propose an optimization that relies on de-duplicating the initial set of graphs before mining patterns.
Lastly, in Section~\ref{sec:mining-inv-index} we describe how an inverted index can be used in order to further reduce the algorithm's runtime.

\subsection{Greedy Algorithm}
\label{sec:mining-greedy}

We propose a pattern mining algorithm that follows a greedy approach.
The algorithm iteratively picks the graph pair $G, G'$ with the highest priority, according to some custom priority function, extracts a pattern $G_C$ of $G, G'$ and replaces $G, G'$ with $G_C$.
This process is repeated until there are no more graph pairs left to consider.

For the duplicated code use case, the priority function is based on the notion of refactor weight of a graph.
Given some graph $G = (V, E)$, each node $v \in V$ has an associated refactor weight $\omega_v$, which depends on its type and the kind of operations it performs.
We consider a refactor weight of $1$ for all nodes except \ActFlowInstruction~nodes that correspond to database accesses.
The weight of such nodes is given by the respective number of database tables, and filter and sort conditions.
Similarly, we consider a refactor weight of $\omega_{u, v} = 1$ for all edges $(u, v) \in E$.
Let $G_{W1} = (V_{W1}, E_{W1}), G_{W2} = (V_{W2}, E_{W2}), \dots, G_{Wp} = (V_{Wp}, E_{Wp})$ denote the $p$ weakly connected components of $G$.
A weakly connected component $G_{Wi}$ is a maximal sub-graph of $G$ such that, for all node pairs $u, v \in V_{Wi}$, $v$ is reachable from $u$ in the undirected counterpart of $G$.
The refactor weight $\omega_G$ of $G$ is given by:
\begin{equation}
    \omega_G = \max_{i \in \lbrace 1, 2, \dots, p \rbrace} \left\lbrace \sum_{v \in V_{Wi}} \omega_v + \sum_{(u, v) \in E_{Wi}} \omega_{u, v} \right\rbrace \text{.}
\end{equation}
We consider the maximum weight across $G$'s components instead of the sum because, from a refactoring perspective, patterns with less but bigger components are preferable.
Given a graph pair $G, G'$, its priority is an upper bound of the refactor weight of an \ac{MCS} of $G$ and $G'$.
Given two components $G_{Wi}, G_{Wj}'$ of $G, G'$ respectively, the upper bound $comp\_ub(G_{Wi}, G_{Wj}')$ for $G_{Wi}, G_{Wj}'$ is given by:
\begin{multline}
    \sum_{\ell \in L_{comb}(E_{Wi}) \cap L_{comb}(E_{Wj}')} \\ \min_{E_W \in \lbrace E_{W_i}^{comb/\ell}, E_{W_j}'^{comb/\ell} \rbrace} \left\lbrace \sum_{(u, v) \in E_W} \left( \omega_{u, v} + \omega_u + \omega_v \right) \right\rbrace \text{.}
\end{multline}
Assuming $G'$ has $q$ components, the refactor weight upper bound $ub(G, G')$ for $G$ and $G'$ is given by:
\begin{equation}\label{eq:mining-greedy-weight-ub}
    ub(G, G') = \max_{i, j \in \lbrace 1, 2, \dots, p \rbrace \times \lbrace 1, 2, \dots, q \rbrace} \left\lbrace comp\_ub(G_{Wi}, G_{Wj}') \right\rbrace \text{.}
\end{equation}
Ties in Equation~\eqref{eq:mining-greedy-weight-ub} are broken using an upper bound of the number of edges of an \ac{MCS} of $G$ and $G'$, given by:
\begin{equation}
    \sum_{\ell \in L_{comb}(E) \cap L_{comb}(E')} \min \left\lbrace \left| E^{comb/\ell} \right|, \left| E'^{comb/\ell} \right| \right\rbrace \text{.}
\end{equation}


\begin{algorithm}[t]
    \SetKwFunction{Heapify}{Heapify}
    \SetKwFunction{Pop}{Pop}
    \SetKwFunction{Push}{Push}
    \SetKwFunction{ExtractPattern}{ExtractMCS}
    \SetKwFunction{PostProcessPattern}{PostProcessMCS}
    \KwIn{$G_1, G_2, \dots, G_n, \beta$}
    $R \leftarrow \emptyset$ \\
    $A \leftarrow \lbrace G_i: 1 \le i \le n \wedge \omega_{G_i} \ge \beta \rbrace$ \label{algl:mining-greedy-act-graphs-init} \\
    $Q \leftarrow \lbrace (ub(G_i, G_j), G_i, G_j): G_i, G_j \in A \wedge i \neq j \rbrace$ \label{algl:mining-greedy-queue-init} \\
    \Heapify{$Q$} \label{algl:mining-greedy-queue-heapify} \\
    \While{$\left| Q \right| > 0$}{ \label{algl:mining-greedy-loop-cond}
        $ub, G, G' \leftarrow \text{\Pop{$Q$}}$ \label{algl:mining-greedy-pop} \\
        \If{$ub \ge \beta \wedge G, G' \in A$}{ \label{algl:mining-greedy-ub-active-check}
            $G_C \leftarrow \text{\ExtractPattern{$G, G'$}}$ \label{algl:mining-greedy-extract-pattern} \\
            \If{$\omega_{G_C} \ge \beta$}{ \label{algl:mining-greedy-pattern-thres-check}
                $A \leftarrow A \setminus \lbrace G, G' \rbrace$ \label{algl:mining-greedy-act-graphs-updt} \\
                $R \leftarrow R \cup \lbrace G_C \rbrace$ \label{algl:mining-greedy-store} \\
                \ForEach{$G \in A$}{ \label{algl:mining-greedy-queue-updt-loop-cond}
                    \Push{$Q, (ub(G, G_C), G, G_C)$} \label{algl:mining-greedy-queue-updt-body}
                }
                $A \leftarrow A \cup \lbrace G_C \rbrace$ \label{algl:mining-greedy-act-graphs-pattern-updt} \\
            }
        }
    }
    \KwRet{$R$}
    \caption[Greedy pattern mining algorithm]{Greedy pattern mining algorithm.}
    \label{alg:mining-greedy}
\end{algorithm}

The greedy pattern mining algorithm is presented in Algorithm~\ref{alg:mining-greedy}.
It receives as input a set of $n$ graphs $G_1, G_2, \dots, G_n$ and a minimum refactor weight threshold $\beta$, and returns a set $R$ of maximal patterns with a refactor weight of at least $\beta$.
It starts by initializing a set $A$ of active graphs, discarding graphs with a refactor weight lower than $\beta$ (line~\ref{algl:mining-greedy-act-graphs-init}).
Then, it initializes a priority queue $Q$ with all possible pairs of graphs in $A$ (lines~\ref{algl:mining-greedy-queue-init} and~\ref{algl:mining-greedy-queue-heapify}).
While $Q$ is not empty (line~\ref{algl:mining-greedy-loop-cond}), it repeatedly pops a pair $G$ and $G'$ from the queue (line~\ref{algl:mining-greedy-pop}), and, if the upper bound for $G$ and $G'$ satisfies the threshold $\beta$ and both graphs are still active (line~\ref{algl:mining-greedy-ub-active-check}), it extracts an \ac{MCS} $G_C$ of $G$ and $G'$ using the approach described in Section~\ref{sec:spe} (line~\ref{algl:mining-greedy-extract-pattern}).
If the refactor weight of $G_C$ satisfies the threshold $\beta$ (line~\ref{algl:mining-greedy-pattern-thres-check}), then $G$ and $G'$ are removed from the active set $A$ (line~\ref{algl:mining-greedy-act-graphs-updt}), $G_C$ is stored in $R$ (line~\ref{algl:mining-greedy-store}), new pairs with $G_C$ and the remaining active graphs are added to $Q$ (lines~\ref{algl:mining-greedy-queue-updt-loop-cond} and~\ref{algl:mining-greedy-queue-updt-body}), and $G_C$ is added to the active graph set (line~\ref{algl:mining-greedy-act-graphs-pattern-updt}).


Due to its greedy nature, one can extend the algorithm in order to obtain a tree hierarchy of the patterns.
Let $G$ and $G'$ be duplicated code patterns that occur across the logic flows in sets $F$ and $F'$ respectively.
Assuming that, at some point during its execution, the algorithm extracts an \ac{MCS} $G_C$ for $G$ and $G'$, then $G_C$ is a possibly smaller pattern that occurs across the flows in $F \cup F'$.
The tree hierarchy would contain an internal node for $G_C$ with two children nodes for $G$ and $G'$.
Analogously, children of $G$ would represent possibly larger patterns that occur in subsets of $F$.
In the future, we plan to explore ways of exploiting this tree hierarchy in order to provide a guided refactoring experience to the user.

\subsection{Lazy Greedy Algorithm}
\label{sec:mining-lazy}

Recall that the pattern miner must return a response within a given time budget.
If said budget expires, the miner should still return a subset of maximal patterns.
Algorithm~\ref{alg:mining-greedy} may incur a long delay until the first pattern extraction due to the eager initialization of the priority queue (line~\ref{algl:mining-greedy-queue-init}), which requires computing refactor weight upper bounds for $O(n^2)$ candidate graph pairs.
For example, for one of our test code bases, the pattern miner must handle about $13$K flows, which corresponds to almost $85$M pairs.
Queue initialization can take up to a couple of hours for such large code bases.

To solve this issue, we propose a lazy version of Algorithm~\ref{alg:mining-greedy}, based on the observation that, given a graph pair $G_i, G_j$, $1 \le i, j \le n$, such that $i \neq j$, and $ub(G_i, G_j) \ge ub(G_i, G_k)$ and $ub(G_i, G_j) \ge ub(G_j, G_k)$ for all $1 \le k \le n$, where $ub$ is the refactor weight upper bound from Equation~\eqref{eq:mining-greedy-weight-ub}, then we can safely extract an \ac{MCS} for $G_i$ and $G_j$ before performing any further upper bound computations.
This property comes as a consequence of the monotonicity of $ub$.

\begin{proposition}
    Given three graphs $G_1 = (V_1, E_1)$, $G_2 = (V_2, E_2)$ and $G_3 = (V_3, E_3)$, and an \ac{MCS} $G_C = (V_C, E_C)$ of $G_1$ and $G_2$, we have that $ub(G_1, G_3) \ge ub(G_C, G_3)$ and $ub(G_2, G_3) \ge ub(G_C, G_3)$.
\end{proposition}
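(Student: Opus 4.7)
By the symmetric roles of $G_1$ and $G_2$ in the \ac{MCS} definition, it suffices to establish $ub(G_1, G_3) \ge ub(G_C, G_3)$; the case $ub(G_2, G_3) \ge ub(G_C, G_3)$ follows identically via $f_2$. The plan is to take the pair of weakly connected components of $G_C$ and $G_3$ that realises $ub(G_C, G_3)$ and exhibit, via the label-preserving injection $f_1 : V_C \to V_1$, a companion pair of components in $G_1$ and $G_3$ whose $comp\_ub$ dominates it.

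First I would fix weakly connected components $G_{Ck}$ of $G_C$ and $G_{3j}$ of $G_3$ that achieve the maximum in Equation~\eqref{eq:mining-greedy-weight-ub}. Because $f_1$ is an injection that sends edges of $G_C$ to edges of $G_1$ with the same combined label, any two nodes of $V_{Ck}$ that are reachable in the undirected counterpart of $G_C$ have images that are reachable in the undirected counterpart of $G_1$. Consequently $f_1(V_{Ck})$ sits inside a single weakly connected component $G_{1i}$ of $G_1$, and restriction of $f_1$ to each fixed combined label $\ell$ produces an injection $E_{Ck}^{comb/\ell} \hookrightarrow E_{1i}^{comb/\ell}$. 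In particular $L_{comb}(E_{Ck}) \subseteq L_{comb}(E_{1i})$, so the label intersection that appears in $comp\_ub$ can only grow when we move from $(G_{Ck}, G_{3j})$ to $(G_{1i}, G_{3j})$.

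Next I would argue term-by-term inside the inner minimum. Since pattern nodes and edges inherit their weights from their preimages (edge weights are uniformly $1$, and the information determining the node weight of an \ActFlowInstruction\ database access is preserved by the label identity $L(v) = L(f_1(v))$), the injection on $\ell$-labelled edges gives
\[
\sum_{(u,v)\, \in\, E_{Ck}^{comb/\ell}} \!\!\!(\omega_{u,v} + \omega_u + \omega_v) \;\le\; \sum_{(u,v)\, \in\, E_{1i}^{comb/\ell}} \!\!\!(\omega_{u,v} + \omega_u + \omega_v).
\]
Combined with the elementary observation that $\min\{a, c\} \le \min\{b, c\}$ whenever $a \le b$, the per-label contribution to $comp\_ub(G_{1i}, G_{3j})$ dominates that of $comp\_ub(G_{Ck}, G_{3j})$. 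Summing over $\ell \in L_{comb}(E_{Ck}) \cap L_{comb}(E_{3j})$ and noting that the extra labels in $L_{comb}(E_{1i}) \cap L_{comb}(E_{3j})$ contribute non-negative summands, I would conclude $comp\_ub(G_{Ck}, G_{3j}) \le comp\_ub(G_{1i}, G_{3j}) \le ub(G_1, G_3)$, which is the required inequality.

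The delicate step is the weight-inheritance clause used in the per-edge comparison: the generic \ac{MCS} definition of Section~\ref{sec:bg-mcs} only enforces equality of node and edge labels, so the pointwise bound on weights has to be justified by appeal to the concrete weight conventions of Section~\ref{sec:mining-greedy} rather than to \ac{MCS} semantics alone. Everything else is bookkeeping: an injection between same-label edge sets, a superset inclusion of label intersections, and monotonicity of the min.
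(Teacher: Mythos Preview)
Your proof is correct and follows essentially the same approach as the paper's: use the label-preserving injection from $G_C$ into $G_1$ to obtain $L_{comb}(E_C) \subseteq L_{comb}(E_1)$ and a per-label domination of the weighted edge sums, then conclude via monotonicity of the minimum. The only notable difference is in bookkeeping: the paper disposes of the multi-component structure by a one-line ``without loss of generality assume a single component'' (which is a little glib, since an \ac{MCS} of two connected graphs need not be connected), whereas you explicitly pick the maximising component pair on the $G_C$ side and exhibit a dominating pair on the $G_1$ side; you also surface the weight-inheritance assumption that the paper uses silently. Both are improvements in rigor, not changes in strategy.
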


\begin{proof}
    Without loss of generality, assume that $G_1$, $G_2$, $G_3$ and $G_C$ contain a single weakly connected component.
    By definition, $G_C$ is a sub-graph of $G_1$.
    Consequently, we have that $L_{comb}(E_C) \subseteq L_{comb}(E_1)$, implying that $L_{comb}(E_C) \cap L_{comb}(E_3) \subseteq L_{comb}(E_1) \cap L_{comb}(E_3)$.
    Additionally, we have that $E_C^\ell \subseteq E_1^\ell$ for any label $\ell$, thus:
    \begin{multline}
        ub(G_C, G_3) = \\ = \sum_{\ell \in L_{comb}(E_C) \cap L_{comb}(E_3)} \min_{E \in \lbrace E_C, E_3 \rbrace} \left\lbrace \sum_{(u, v) \in E^\ell} \left( \omega_{u, v} + \omega_u + \omega_v \right) \right\rbrace \le \\ \le \sum_{\ell \in L_{comb}(E_1) \cap L_{comb}(E_3)} \min_{E \in \lbrace E_1, E_3 \rbrace} \left\lbrace \sum_{(u, v) \in E^\ell} \left( \omega_{u, v} + \omega_u + \omega_v \right) \right\rbrace = \\ = ub(G_1, G_3) \text{.}
    \end{multline}
    Same goes for $G_2$.
\end{proof}

\begin{algorithm}[t]
    \DontPrintSemicolon
    \SetKwProg{Fn}{Function}{}{end function}
    \SetKwFunction{ActivateGraph}{ActivateGraph}
    \SetKwFunction{First}{First}
    \SetKwFunction{Push}{Push}
    \SetKwFunction{Pop}{Pop}
    \KwIn{$G_1, G_2, \dots, G_n, \beta$}
    \;
    \Fn{\ActivateGraph{$Q, A, I, G$}}{
        $I \leftarrow I \setminus \lbrace G \rbrace$ \label{algl:mining-lazy-activate-remove-inactive} \\
        \ForEach{$G' \in I$}{ \label{algl:mining-lazy-activate-loop-cond}
            \Push{$Q, (ub(G, G'), G, G')$} \label{algl:mining-lazy-activate-loop-body}
        }
        $A \leftarrow A \cup \lbrace G \rbrace$ \label{algl:mining-lazy-activate-add-active} \\
        \KwRet{$Q, A, I$}
    }
    \;
    $R, A, Q \leftarrow \emptyset, \emptyset, \emptyset$ \label{algl:mining-lazy-empty-set-init} \\
    $I \leftarrow \lbrace G_i: 1 \le i \le n \wedge \omega_{G_i} \ge \beta \rbrace$ \label{algl:mining-lazy-inactive-init} \\
    \While{$\left| Q \right| > 0 \vee \left| I \right| > 1$}{ \label{algl:mining-lazy-main-loop-cond}
        \If{$\left| Q \right| = 0$}{ \label{algl:mining-lazy-empty-queue-check}
            $Q, A, I \leftarrow \ActivateGraph{Q, A, I, \First{I}}$ \label{algl:mining-lazy-activate-first}
        }
        $ub, G, G' \leftarrow \First{Q}$ \label{algl:mining-lazy-act-loop-init} \\
        \While(\tcp*[h]{assume $G \notin I$ for simplicity}){$G' \in I$}{
            $Q, A, I \leftarrow \ActivateGraph{Q, A, I, G'}$ \\
            $ub, G, G' \leftarrow \First{Q}$ \label{algl:mining-lazy-act-loop-end}
        }
        $\Pop{Q}$ \label{algl:mining-lazy-queue-pop} \\
        \If{$ub \ge \beta \wedge G, G' \in A$}{
            $G_C \leftarrow \text{\ExtractPattern{$G, G'$}}$ \\
            \If{$\omega_{G_C} \ge \beta$}{
                $A \leftarrow A \setminus \lbrace G, G' \rbrace$ \\
                $R \leftarrow R \cup \lbrace G_C \rbrace$ \\
                \ForEach{$G \in A$}{ \label{algl:mining-lazy-pattern-updt-loop-cond}
                    \Push{$Q, (ub(G, G_C), G, G_C)$} \label{algl:mining-lazy-pattern-updt-loop-body}
                }
                $I \leftarrow I \cup \lbrace G_C \rbrace$ \label{algl:mining-lazy-pattern-inactive-updt}
            }
        }
    }
    \KwRet{$R$}
    \caption[Lazy greedy pattern mining algorithm]{Lazy greedy pattern mining algorithm.}
    \label{alg:mining-lazy}
\end{algorithm}

The lazy greedy pattern mining algorithm is presented in Algorithm~\ref{alg:mining-lazy}.
It shares many similarities with Algorithm~\ref{alg:mining-greedy}, the main difference being the management of the priority queue $Q$ and active graph set $A$.
Initially, $Q$ and $A$ are empty (line~\ref{algl:mining-lazy-empty-set-init}) and a set of inactive graphs $I$ is initialized with all graphs with a refactor weight that satisfies the threshold $\beta$ (line~\ref{algl:mining-lazy-inactive-init}).
At each iteration, the algorithm starts by checking if $Q$ is empty (line~\ref{algl:mining-lazy-empty-queue-check}).
If so, then a graph $G \in I$ is activated (line~\ref{algl:mining-lazy-activate-first}).
This corresponds to moving $G$ from $I$ to $A$ (lines~\ref{algl:mining-lazy-activate-remove-inactive} and~\ref{algl:mining-lazy-activate-add-active}) and adding new pairs to $Q$ containing $G$ and each remaining inactive graph (lines~\ref{algl:mining-lazy-activate-loop-cond} and~\ref{algl:mining-lazy-activate-loop-body}).
Next, if necessary, additional graphs are activated until the pair in $Q$ with the highest upper bound no longer contains inactive graphs (lines~\ref{algl:mining-lazy-act-loop-init} to~\ref{algl:mining-lazy-act-loop-end}).
The rest of the algorithm (lines~\ref{algl:mining-lazy-queue-pop} to~\ref{algl:mining-lazy-pattern-inactive-updt}) behaves in the same way as Algorithm~\ref{alg:mining-greedy}, with the exception that each new \ac{MCS} $G_C$ is added to the inactive set $I$ instead of $A$ (line~\ref{algl:mining-lazy-pattern-inactive-updt}).
This process is repeated until $Q$ becomes empty and at most $1$ inactive graph is left (line~\ref{algl:mining-lazy-main-loop-cond}).




\subsection{Isomorphic Logic Flows}
\label{sec:mining-isomorphic}

In practice, we observed that it is common for some of the logic flows to be fully duplicated.
For example, among the $13$K flows in the test code base mentioned at the start of the previous section, about $2$K of them ($\approx 15\%$) are full duplicates.
Finding full duplicates is much cheaper than mining \acp{MCS}, hence we propose an algorithm for de-duplicating the code base in order to significantly reduce the number of refactor weight upper bound computations.

Given two graphs $G_1 = (V_1, E_1)$ and $G_2 = (V_2, E_2)$ and an \ac{MCS} $G_C = (V_C, E_C)$ of $G_1$ and $G_2$, we say that $G_1$ and $G_2$ are isomorphic if and only if, for all $v \in V_1 \cup V_2$, $v \in V_C$, and for all $(u, v) \in E_1 \cup E_2$, $(u, v) \in E_C$.
When this is the case, we refer to $G_C$ as an isomorphic duplicated code pattern.
The \ac{MaxSAT} encoding presented in Section~\ref{sec:spe-encoding} can be adapted to extract only isomorphic patterns by adding the following hard clauses:
\begin{itemize}
    \item Unit clauses containing each of the inclusion and control-flow variables.
    \item A clause $\left( \bigvee_{v \in V_1} f_{v, v'} \right)$ for each node $v' \in V_2$.
    \item A clause $(\neg f_{u, u'} \vee \neg f_{v, v'})$ for each edge $(u', v') \in E_2$ and nodes $u, v \in V_1$ such that $(u, v) \notin E_1$ or $L(u, v) \neq L(u', v')$.
\end{itemize}
This variant is a decision problem, which can be solved much more efficiently than its optimization version.
In fact, in many practical scenarios, one can quickly conclude that $G_1$ and $G_2$ are not isomorphic by checking if $\left| V_1 \right| \neq \left| V_2 \right|$ or $\left| E_1 \right| \neq \left| E_2 \right|$, or if any of the pre-processing rules in Section~\ref{sec:spe-preprocess} is applicable.

\begin{algorithm}[t]
    \DontPrintSemicolon
    \SetKw{break}{break}
    \SetKwFunction{Sort}{Sort}
    \SetKwFunction{IsIsomorphic}{IsIsomorphic}
    \SetKwFunction{GetIsomorphicPattern}{GetIsomorphicPattern}
    \SetKwFunction{GetPatterns}{GetPatterns}
    \KwIn{$G_1, G_2, \dots, G_n$}
    $D \leftarrow \emptyset$ \label{algl:mining-iso-init-dict} \\
    \For{$i \leftarrow 1$ \KwTo $n$}{
        $key \leftarrow \text{\Sort{$[L_{comb}(u, v): (u, v) \in E_i]$}}$ \label{algl:mining-iso-key} \\
        \ForEach{$G \in D[key]$}{ \label{algl:mining-iso-inner-loop-cond}
            \If{\IsIsomorphic{$G_i, G$}}{ \label{algl:mining-iso-iso-check}
                $G_C \leftarrow \text{\GetIsomorphicPattern{$G_i, G$}}$ \label{algl:mining-iso-pattern} \\
                $D \leftarrow \left( D \setminus \lbrace (key, G) \rbrace \right) \cup \lbrace (key, G_C) \rbrace$ \label{algl:mining-iso-dict-pattern-updt} \\
                \break
            }
        }
        \If{$\nexists_{G \in D[key]} \, \text{\IsIsomorphic{$G_i, G$}}$}{ \label{algl:mining-iso-no-iso-check}
            $D \leftarrow D \cup \lbrace (key, G_i) \rbrace$ \label{algl:mining-iso-dict-graph-updt}
        }
    }
    \KwRet{\GetPatterns{$D$}} \label{algl:mining-iso-return}
    \caption[Isomorphic pattern mining algorithm]{Isomorphic pattern mining algorithm.}
    \label{alg:mining-iso}
\end{algorithm}

The isomorphic pattern mining algorithm is presented in Algorithm~\ref{alg:mining-iso}.
It maintains a dictionary $D$ of lists of graphs where the isomorphic patterns are stored.
Initially, $D$ is empty (line~\ref{algl:mining-iso-init-dict}).
For each graph $G_i$, the algorithm starts by computing the key for $G_i$, which is the sorted concatenation of the combined labels of the edges in $E_i$ (line~\ref{algl:mining-iso-key}).
Next, it checks if there exists a graph $G$ in $D$ with the same key as $G_i$, such that $G$ and $G_i$ are isomorphic (lines~\ref{algl:mining-iso-inner-loop-cond} and~\ref{algl:mining-iso-iso-check}).
Note that $G$ and $G_i$ will have the same key if and only if each combined label appears the exact same number of times in both graphs, which is a necessary condition in order for $G$ and $G_i$ to be isomorphic.
If such $G$ exists, then an isomorphic pattern $G_C$ is extracted for $G$ and $G_i$ (line ~\ref{algl:mining-iso-pattern}), and $G$'s entry in $D$ is replaced with $G_C$ (line~\ref{algl:mining-iso-dict-pattern-updt}).
Otherwise, $G_i$ is added to $D$ (lines~\ref{algl:mining-iso-no-iso-check} and~\ref{algl:mining-iso-dict-graph-updt}).
Finally, the isomorphic patterns in $D$ are returned by the algorithm (line~\ref{algl:mining-iso-return}).

\subsection{Inverted Index}
\label{sec:mining-inv-index}

Although de-duplication helps, the number of candidate graph pairs can still be prohibitively high.
For example, de-duplicating the test code base reduces the number of flows to $11$K, which still results in about $61$M pairs. 
In order to further reduce this number, we use a partial inverted index like the one proposed in \SourcererCC~\cite{sourcerercc}.
In the context of our work, the inverted index is a mapping of combined edge labels to lists of graphs that those labels appear in.
The index is deemed partial because it contains entries only for a subset of combined labels that occur with the most frequency.

\begin{algorithm}[t]
    \DontPrintSemicolon
    \SetKw{break}{break}
    \SetKwFunction{SortByGlobalFrequency}{SortByGlobalFrequency}
    \KwIn{$G_1, G_2, \dots, G_n, \delta$}
    $I \leftarrow \emptyset$ \\
    \For{$i \leftarrow 1$ \KwTo $n$}{
        $B \leftarrow \text{\SortByGlobalFrequency{$L_{comb}(E_i)$}}$ \label{algl:mining-inv-index-sort} \\
        \For{$j \leftarrow 1$ \KwTo $\lceil \left| B \right| \cdot \delta \rceil$}{ \label{algl:mining-inv-index-updt-loop}
            $I \leftarrow I \cup \lbrace (B[j], G_i) \rbrace$ \label{algl:mining-inv-index-updt}
        }
    }
    \KwRet{$I$}
    \caption[Partial inverted index creation]{Partial inverted index creation.}
    \label{alg:mining-inv-index}
\end{algorithm}

Algorithm~\ref{alg:mining-inv-index} describes the process of creating the index for a given set of graphs $G_1, G_2, \dots, G_n$.
For each graph $G_i$, it starts by creating a bag $B$ of the combined labels that appear in $E_i$, sorted in decreasing order of their global frequency (line~\ref{algl:mining-inv-index-sort}).
The global frequency of some combined label $\ell \in L_{comb}(E_i)$ is given by:
\begin{equation}
    \frac{\sum_{j = 1}^n \left| E_j^{comb/\ell} \right|}{\sum_{j = 1}^n \left| E_j \right|} \text{.}
\end{equation}
Lastly, entries containing $G_i$ are added to $I$ for a prefix of $B$ (lines~\ref{algl:mining-inv-index-updt-loop} and~\ref{algl:mining-inv-index-updt}).
The prefix size is controlled through the $\delta$ input parameter, which represents the fraction of a graph's combined labels to include in the index.
For example, if $\delta = 0.2$, then the $20\%$ most frequent combined labels in $B$ are included in $I$.

Algorithm~\ref{alg:mining-greedy} requires the following changes in order to integrate the inverted index:
(1) During queue initialization (line~\ref{algl:mining-greedy-queue-init}), only pairs of graphs that occur in the same index list are considered.
(2) A new pattern $G_C$ is added to the index before the queue update (lines~\ref{algl:mining-greedy-queue-updt-loop-cond} and~\ref{algl:mining-greedy-queue-updt-body}), and the respective new queue pairs should contain only graphs that occur in the same index lists as $G_C$.
The same reasoning applies to the queue updates in lines~\ref{algl:mining-lazy-activate-loop-cond}, \ref{algl:mining-lazy-activate-loop-body}, \ref{algl:mining-lazy-pattern-updt-loop-cond} and~\ref{algl:mining-lazy-pattern-updt-loop-body} of Algorithm~\ref{alg:mining-lazy}.

\section{Experimental Evaluation}
\label{sec:eval}

In this section, the performance of the pattern mining algorithms and optimizations proposed in Section~\ref{sec:mining} is evaluated.
The pattern miners were executed on benchmark sets of logic flows from a random sample of 800 real-world code bases written in \TargetVPL\footnote{Unfortunately, these code bases cannot be made publicly available due to client privacy agreements.}.
The 800 code bases were sampled uniformly from the full world of 1491 code bases that existed at experimentation time.
Note that an \TargetVPL code base contains multiple web/mobile applications, typically hundreds.
In order to protect sensitive data, the code bases are anonymized at the source, including the replacement of string literals with hashes. 
Two performance indicators are considered: mining time, i.e. elapsed time since the start of the mining algorithm until termination, and duplicated refactor weight found, i.e. total refactor weight of the nodes and edges that appear in the patterns returned by the algorithm.
Note that the same node/edge may appear multiple times across different patterns, but the respective refactor weight is counted only once.
Precision/recall is not considered because the respective results depend heavily on the node/edge labels, and the focus of this work is algorithm scalability.

\begin{table}[t]
    \centering
    \caption{Statistics on the number of flows and nodes per benchmark set.}
    \label{tb:inst-stats}
    \begin{tabular}{ l | r r r |}
        \textbf{Parameter} & \textbf{min} & \textbf{median} & \textbf{max} \\
        \hline
        Flows & $95$ & $2092$ & $45691$ \\
        Nodes & $908$ & $17043$ & $413135$ \\
        Flows considered & $20$ & $607$ & $15463$ \\
        Nodes considered & $123$ & $5228$ & $144213$ \\
        \hline
    \end{tabular}
\end{table}

Before running the mining algorithms, nodes that cannot be refactored to a separate logic flow, such as \ActFlowStart and \ActFlowEnd nodes, are discarded.
Flows with refactor weight lower than the threshold $\beta$ ($5$ in our experiments) are also discarded.
Lastly, only benchmarks for which there exists a noticeable difference in results are considered in the evaluation, i.e., we ignore benchmarks for which the mining time and duplicated refactor weight across all pattern miner configurations does not vary by more than 1 second and $0.1\%$ respectively.
This results in a final collection of 693 benchmarks\footnote{The remaining 107 correspond to small code bases that were processed in less than 2 seconds by all configurations. The exact same amount of duplication was also found in each of the 107 benchmarks.}.
Table~\ref{tb:inst-stats} summarizes several statistics regarding the number of flows and nodes in these benchmarks.
Flows/nodes considered corresponds to the flows/nodes that are not discarded before mining.

\begin{table*}[t]
    \centering
    \caption{Maximum amounts of duplicated code found per benchmark set.}
    \label{tb:dup-stats}
    \begin{tabular}{ l | r r r r r r r r | r |}
        \textbf{Parameter} & \textbf{min} & $\mathbf{p = 0.25}$ & $\mathbf{p = 0.5}$ & $\mathbf{p = 0.75}$ & $\mathbf{p = 0.90}$ & $\mathbf{p = 0.95}$ & $\mathbf{p = 0.99}$ & \textbf{max} & \textbf{total} \\
        \hline
        Flows with duplicated code & $0.7\%$ & $8.8\%$ & $12.1\%$ & $15.8\%$ & $20.6\%$ & $23.4\%$ & $32.3\%$ & $42.9\%$ & - \\
        Duplicated nodes found & $0.7\%$ & $6.9\%$ & $9.9\%$ & $13.2\%$ & $18.1\%$ & $22.0\%$ & $30.9\%$ & $39.0\%$ & - \\
        Duplicated weight found & $14$ & $1185.00$ & $3623.00$ & $9956.00$ & $22632.00$ & $40896.40$ & $101675.72$ & $270756$ & $6817353$ \\
        \hline
    \end{tabular}
\end{table*}

Table~\ref{tb:dup-stats} shows some statistics regarding the amount of duplication found in the benchmark sets.
For each benchmark set and parameter, the maximum value obtained across all evaluated pattern mining configurations is considered.
The $p = x$ columns show the $x$-percentile values for each parameter.
For example, a value of $15.8\%$ in the $p = 0.75$ column of the 'Flows with duplicated code' row indicates that, in $75\%$ of the benchmarks, $15.8\%$ or less of the flows are found to contain duplicated code.
Note that these percentages consider the full universe of flows/nodes present in these benchmarks before pre-processing.


In order to solve the \ac{MCS} \ac{MaxSAT} instances, the PySAT~\cite{pysat} implementation of linear search~\cite{linear-search} is used.
Each \ac{MCS} extraction is run with a timeout of 10 seconds.
When the timeout is triggered, an approximate \ac{MCS} is retrieved from the best solution found by the linear search algorithm.
In order to prevent the pattern miner from becoming stuck due to occasional huge flows that result in hard \ac{MaxSAT} instances, the respective graphs are always removed from the active graph set whenever linear search fails to prove optimality, regardless of the refactor weight of the approximate \ac{MCS}.
In our experiments, we observed that timeouts are a rare occurrence: $0.1\%$ of a total of 267481 \ac{MCS} extractions for one of the configurations with graph pre-processing (Section~\ref{sec:spe-preprocess}) and inverted index (Section~\ref{sec:mining-inv-index}) enabled.
To solve the isomorphic pattern \ac{SAT} instances, we run PySAT with a timeout of 10 seconds as well.
In our experiments, PySAT was configured to use the Glucose \ac{SAT} solver (version 4.1)~\cite{glucose}.
All experiments were run on an AWS m5a.12xlarge instance with 128 GB of RAM.


\subsection{Algorithm Comparison}
\label{sec:eval-alg-comp}

\begin{table}[t]
    \centering
    \caption{Maximum and total mining time for different configurations of the pattern mining algorithms.}
    \label{tb:alg-eval}
    \begin{tabular}{ l | r r |}
        \textbf{Algorithm} & \textbf{max} & \textbf{total} \\
        \hline
        \Greedy & 2h09m12s & 1d09h52m39s \\
        \Lazy & 2h07m22s & 1d06h20m08s \\
        \DedupThenLazy & 1h13m32s & 19h49m41s \\
        \DedupThenLazyInvIndex & 48m39s & 6h30m45s \\
        \hline
    \end{tabular}
\end{table}

\begin{figure*}[t]
    \centering
    \begin{minipage}{0.31\textwidth}
        \centering
        \includegraphics[width=\linewidth]{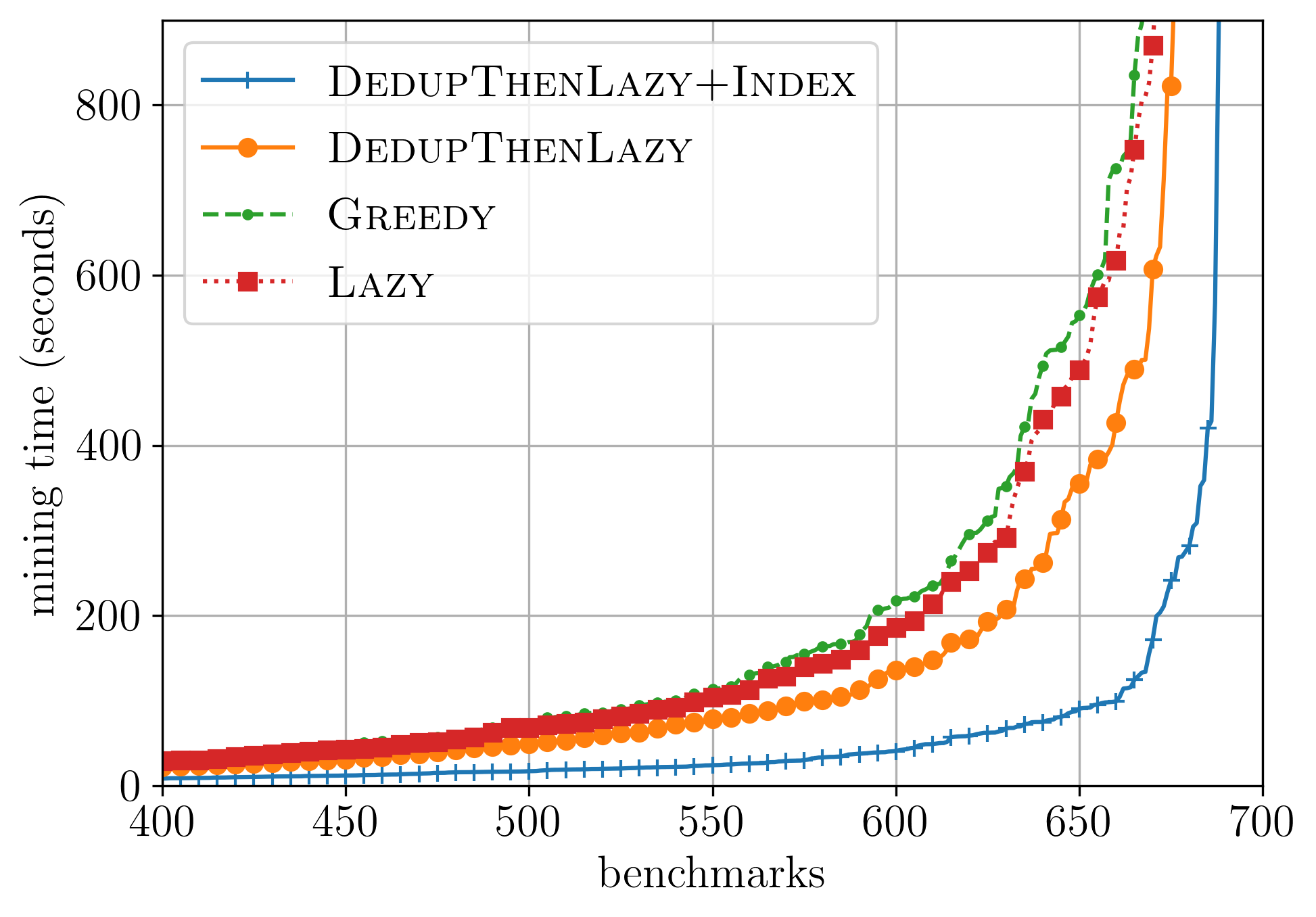}
        \caption{Distribution of mining time, in seconds, for different configurations of the pattern mining algorithms.}
        \label{fig:alg-eval}
    \end{minipage}%
    \hfill
    \begin{minipage}{0.31\textwidth}
        \centering
        \includegraphics[width=\linewidth]{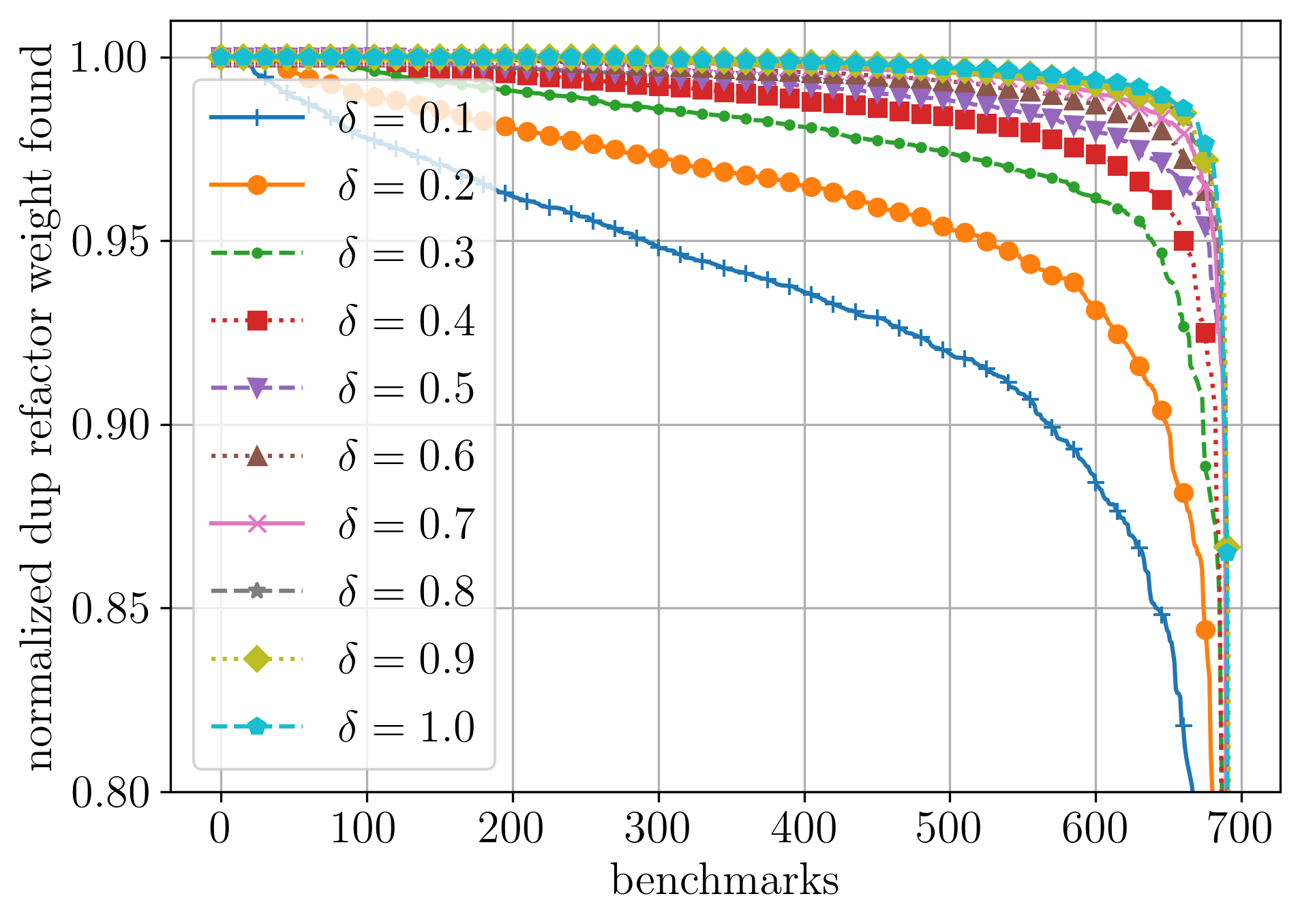}
        \caption{Normalized duplicated refactor weight found distribution for different values of the index's $\delta$ parameter.}
        \label{fig:inv-index-delta-refact-weight}
    \end{minipage}%
    \hfill
    \begin{minipage}{0.31\textwidth}
        \centering
        \includegraphics[width=\linewidth]{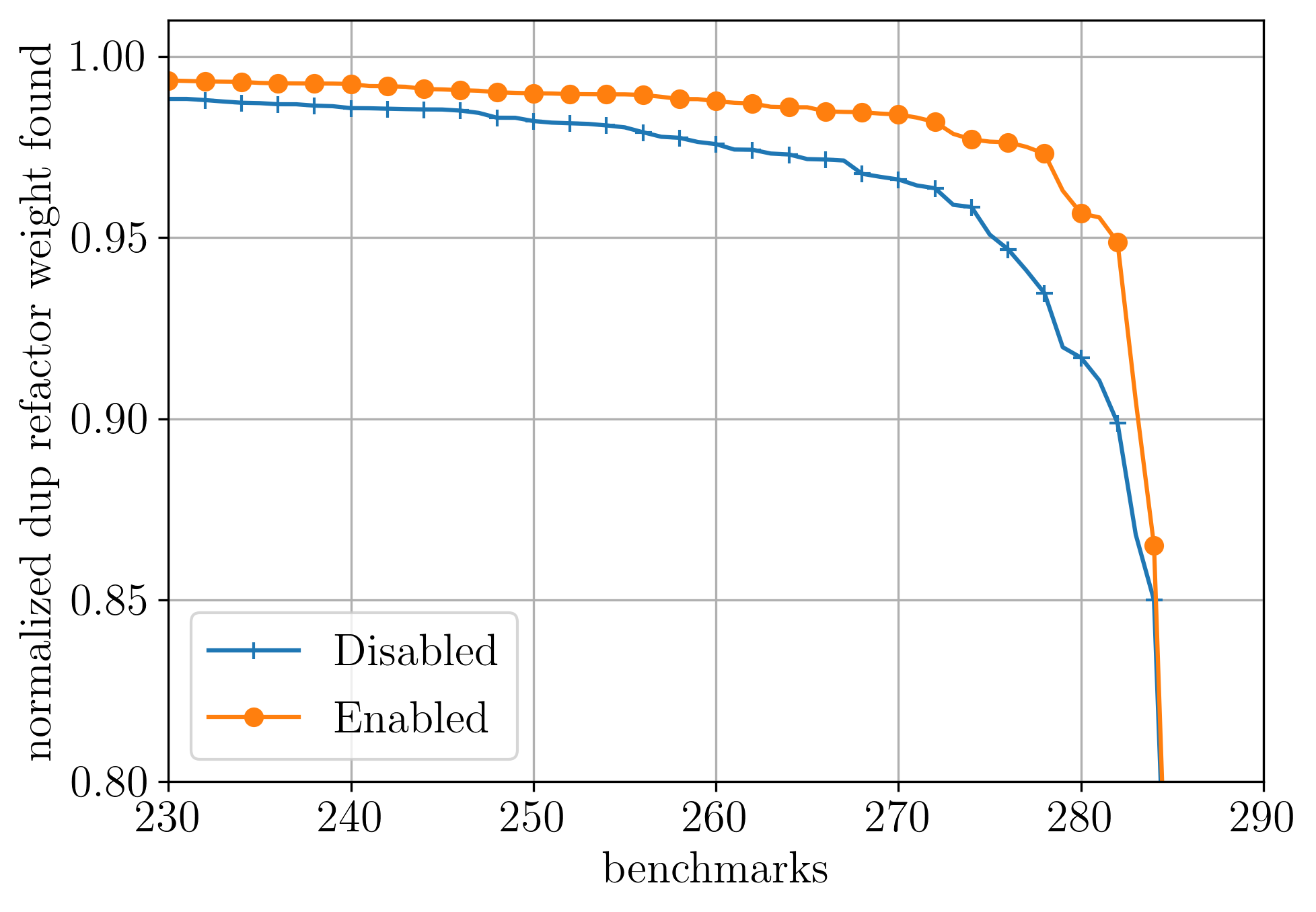}
        \caption{Normalized duplicated refactor weight found distribution with and without graph pre-processing.}
        \label{fig:inv-index-simp-refact-weight}
    \end{minipage}
\end{figure*}

Table~\ref{tb:alg-eval} compares the maximum and total mining time for the \Greedy (Section~\ref{sec:mining-greedy}) and \Lazy (Section~\ref{sec:mining-lazy}) algorithms.
Both algorithms were executed with graph pre-processing enabled.
\Lazy shows better performance than the original non-lazy version, being able to process all the benchmarks in $10.5\%$ less time.
Figure~\ref{fig:alg-eval} shows a distribution plot comparing the mining times of the algorithms.
For a given algorithm, each $(x, y)$ point in the plot indicates that, for $x$ benchmarks, the mining time of that algorithm is at most $y$.
For example, the $(600, 200)$ point in the line that corresponds to \Lazy indicates that 600 of the benchmarks are processed in 200 seconds or less by that algorithm.
Overall, we can see a small but noticeable reduction in mining times for \Lazy compared to \Greedy.

The performance improvement in terms of mining time was expected, since \Lazy adds graph pairs to the queue on an as-needed basis, resulting in a lower overhead incurred by queue updates.
Recall that the main advantage of \Lazy is a much shorter time-to-first-pattern, since, unlike \Greedy, it does not suffer from the major initialization overhead incurred by the eager initialization of the queue.
We observed an average time-to-first-pattern of $1$ second for \Lazy versus $96$ seconds for \Greedy, and maximum values of $25$ seconds and over $1$ hour and $12$ minutes respectively.

\subsection{Impact of De-duplication}

Table~\ref{tb:alg-eval} and Figure~\ref{fig:alg-eval} show the impact, in terms of mining time, of applying de-duplication (Section~\ref{sec:mining-isomorphic}) before running the \Lazy algorithm.
Note that the time spent on de-duplication is accounted for in the reported mining times.
Overall, we can see that the performance boost is quite significant.
In particular, \DedupThenLazy achieves a reduction of $34.6\%$ in total mining time compared to \Lazy.
Additionally, the reduction for the hardest benchmark is $42.3\%$.
This reduction makes sense because, as we observed in our experiments, \DedupThenLazy spends, on average, $13.8\%$ of mining time on de-duplication and removes about $19.4\%$ of the flows before running the pattern mining algorithm.

\subsection{Impact of Inverted Index}
\label{sec:eval-inv-index}

Table~\ref{tb:alg-eval} and Figure~\ref{fig:alg-eval} also compare the performance of \DedupThenLazy with and without the inverted index.
For this experiment, the $\delta$ parameter of the inverted index was set to $1.0$.
We can see that, compared to lazyfication and de-duplication, the inverted index has, by far, the largest overall positive impact in the performance of the pattern mining algorithm, achieving a reduction of $67.2\%$ in total mining time.
The performance improvement observed for the hardest instance is not as significant: a reduction of $33.8\%$.
This performance boost was expected since, with the inverted index, the algorithm only needs to compare each graph with a much smaller subset of graphs with overlapping combined edge labels, versus comparing all possible pairs.
Overall, the combination of lazyfication, de-duplication and the inverted index results in a total mining time reduction of $80.8\%$ compared to the original \Greedy algorithm.

Additional experiments were performed in order to evaluate the impact of the $\delta$ parameter on the performance of the mining algorithm.
We tested values of $\delta$ ranging from $0.1$ to $1.0$ in increments of $0.1$.
We observed that, in the best case, a value of $\delta = 0.1$ resulted in a small total mining time reduction of $4.5\%$ compared to $\delta = 1.0$.
Detailed results regarding mining time for the different values of $\delta$ are not shown due to space limitations.

Figure~\ref{fig:inv-index-delta-refact-weight} shows a distribution plot of duplicated refactor weight found for the different values of $\delta$.
These values are normalized against the largest duplicated refactor weight values found for each benchmark.
For a given value of $\delta$, each $(x, y)$ point in the plot indicates that, for $x$ benchmarks, the duplicated refactor weight found with $\delta$ is at least a fraction $y$ of the best value.
We can see that decreasing $\delta$ can have a significant negative impact on the amount of duplication that the algorithm is able to detect, particularly with $\delta \le 0.3$.
The impact is much less significant for $\delta \ge 0.6$.
However, using $\delta = 0.6$ results in a total mining time reduction of just $1.1\%$.
Overall, such a small mining time reduction does not compensate the negative impact on the algorithm's detection capabilities.

\subsection{Impact of Graph Pre-processing}
\label{sec:eval-graph-preproc}

\begin{table}[t]
    \centering
    \caption{Performance comparison of pattern mining with and without graph pre-processing.}
    \label{tb:pre-proc-eval}
    \begin{tabular}{ l | r r r | r |}
        & \multicolumn{3}{c |}{\textbf{\ac{MaxSAT} instances}} & \multicolumn{1}{c |}{\textbf{Mining}} \\
        \textbf{Pre-proc} & \textbf{total} & \textbf{optimal} & \textbf{total time} & \textbf{total time} \\
        \hline
        Disabled & 266584 & 266176 & 3h16m03s & 7h22m38s \\
        Enabled & 267481 & 267249 & 2h03m36s & 6h30m45s \\
        \hline
    \end{tabular}
\end{table}

Table~\ref{tb:pre-proc-eval} compares the performance of the lazy algorithm, with and without graph pre-processing, in terms of time spent solving \ac{MaxSAT} instances in addition to the total mining time.
Note that the time spent building the encoding and on pre-processing is accounted for in the reported times.
In both scenarios, the algorithm was executed with the inverted index enabled.
Enabling pre-processing results in the generation of 897 extra \ac{MaxSAT} instances.
This increase makes sense because pre-processing leads to the generation of smaller, and thus easier \ac{MaxSAT} instances.
Consequently, linear search is able to find larger \acp{MCS} before the timeout, resulting in more \acp{MCS} being generated before triggering the $\beta$ threshold of the pattern miner.
Note that, with pre-processing, optimality of the \ac{MCS} is proven for 1073 additional instances, which is more than the 897 extra ones.
Moreover, despite these extra instances, $37\%$ less time is spent in total solving \ac{MaxSAT} instances.
Overall, this translates to a reduction in total mining time of $11.7\%$.

Figure~\ref{fig:inv-index-simp-refact-weight} shows a distribution plot comparing the duplicated refactor weight found with and without graph pre-processing.
In order to improve readability, only benchmarks for which there was a variation of at least $0.1\%$ are considered.
We can see that a moderate improvement is achieved by enabling graph pre-processing.
This is expected since, as mentioned previously, more and larger \acp{MCS} are generated when pre-processing is enabled.

\section{Related Work}
\label{sec:related-work}

Many duplicated code detection techniques have been proposed in the literature for text-based programming languages.
Rattan et al.~\cite{clone-survey} wrote an extensive survey on this topic, where they classify these techniques into five main categories: text-based~\cite{text-parameterized-dup,text-near-dup,nicad,nicad-plus,text-substring-matching,text-lang-independent}, token-based~\cite{sourcerercc,ccfinder,cp-miner,token-incr-clone-detect,ccaligner,siamese}, tree-based~\cite{deckard,ast-clone-detect,clone-detection}, graph-based~\cite{graph-pdg-similar,gplag,ccsharp,graph-slicing,graph-scalable-accurate,ccgraph} and metrics-based~\cite{metrics-java-systems,metrics-reengineering,metrics-function-clones}.
Recently, several detectors based on machine learning have also emerged~\cite{oreo,cdlh,astnn,deepsim,ml-code-fragments,ml-tree-based-convolution,scdetector}.
These techniques (except most graph-based) only support duplicated code detection at a pre-defined granularity, i.e., are able to report, for example, groups of methods as duplicated, but are unable to do so for relatively small but frequent duplicated code patterns contained within said methods.
For example, such techniques may miss duplicated patterns like the one from Figures~\ref{fig:act-flow-simple} and~\ref{fig:act-flow-foreach} since only $60\%$ of those flows' logic is duplicated.

Graph-based detectors analyse the \acp{PDG} of the code blocks in order to detect duplicated code.
Typically, these approaches also rely on searching for isomorphic sub-graphs~\cite{graph-pdg-similar,graph-slicing,gplag,ccsharp}.
Because such detectors consider the \ac{PDG} of the graph, these are able to detect semantic duplicates with many syntactic changes.
However, scalability is an issue due to the hardness of checking sub-graph isomorphism.
Some graph-based detectors mitigate this by using heuristics in order to avoid some of these checks~\cite{gplag,ccsharp}, applying some limited form of pre-processing to the \acp{PDG}~\cite{ccsharp} or using approximate graph matching~\cite{ccgraph}.
To the best of our knowledge, ours is the first graph-based approach that solves the scalability issue by means of an inverted index.

Some approaches exist in the literature for detecting duplicated code in Simulink models~\cite{conqat,escan,scanqat,simone,opmcd}.
\SIMONE~\cite{simone} applies the \NiCaD~\cite{nicad,nicad-plus} text-based detector on textual representations of the models, thus sacrificing visual structure.
\ConQAT~\cite{conqat} uses heuristics to mine large duplicated code patterns from promising pairs of graphs and then group these patterns into clusters.
Due to the heuristic nature of the algorithm, it does not ensure intra-cluster consistency of the graph structure of the patterns.
Additionally, it is not able to detect smaller more frequent duplicated code patterns contained within larger less frequent ones.
\eScan~\cite{escan} solves these issues by using a combination of frequent sub-graph mining and maximal clique covering instead.
However, it has been shown that this approach does not scale in practice~\cite{scanqat,model-clones-practice}.
\ScanQAT~\cite{scanqat} mitigates this issue by combining \ConQAT and \eScan, but the reported results show a modest improvement over the latter.

\ac{MCS} extraction is a well-known problem with several important applications besides duplicated code detection~\cite{malware-sig-synth,mcs-app-chemical,mcs-app-graph-clustering,mcs-app-graph-dbs}.
Classical approaches solve the \ac{MCS} problem via reduction to maximum clique~\cite{mcs-clique}.
Our approach is closely related to more recent work that translates the problem to a constraint satisfaction~\cite{mcs-csp-vismara,mcs-csp-mccreesh} or an integer linear programming problem~\cite{mcs-ip}.
An alternative solution, proposed by McCreesh et al.~\cite{mcs-bnb}, uses branch and bound to search for \acp{MCS}.
A later iteration of this approach exploits reinforcement learning in order to learn a more effective branching heuristic~\cite{mcs-bnb-rl}.

Frequent sub-graph mining is closely related to the duplicated code pattern mining problem addressed in this work.
The typical solution is to follow a top-bottom approach that starts with a set of very small high frequency candidate common sub-graphs and iteratively extends them with new nodes/edges until their frequency falls below a given threshold~\cite{mining-apriori,subdue,gspan,gaston,dmtl}.
By nature, this approach maximizes sub-graph size while maintaining a pre-specified minimum frequency.
We decided to implement our own custom mining algorithms for duplicated code because, if a timeout is triggered, it is preferable to return a set of large high-impact patterns than a set of high frequency patterns with very few nodes/edges.

\section{Limitations and Discussion}
\label{sec:limits-discuss}

\begin{figure}[t]
    \centering
    \includegraphics[width=0.47\textwidth]{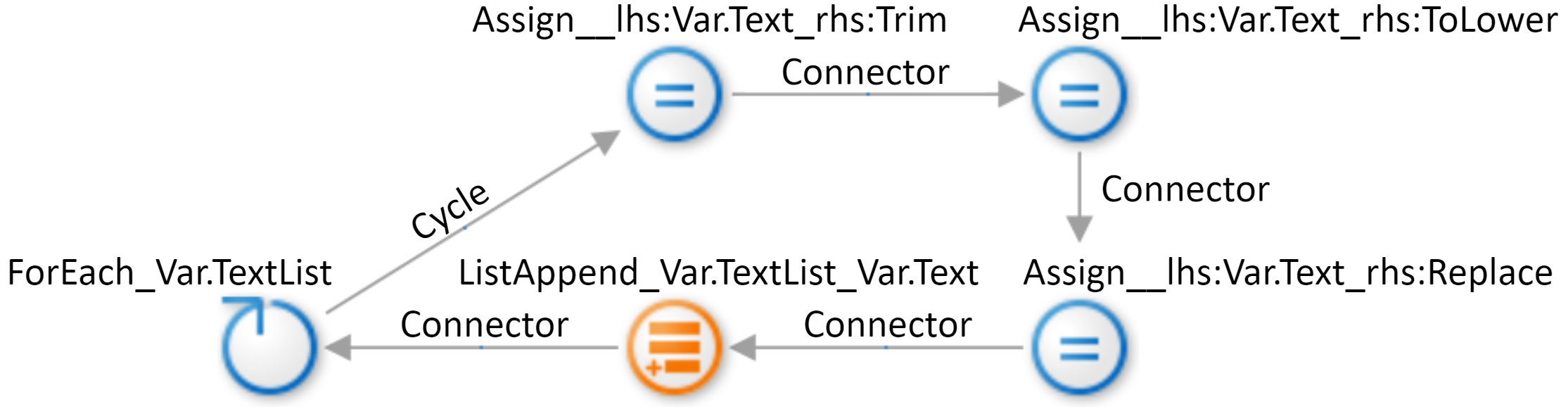}
    \caption{The labeled graph for the logic flow in Figure~\ref{fig:act-flow-foreach}.}
    \label{fig:labeled-act-flow}
\end{figure}

The precision and recall of the proposed approach strongly depends on the quality of the node and edge labels.
These were defined based on extensive iterative feedback from expert \TargetVPL developers.
The edge labels are set to their respective types in the logic flows, with the exception of \ActFlowSwitch~branches which consider the variable types and function calls that appear in the respective \ActFlowSwitch~conditions.
Node labels, however, can be quite sophisticated depending on their type.
A simple example is the \ActFlowIf~node label, which considers what kind of condition is being checked (e.g. null check) in addition to the respective variable types and function calls.
On the other hand, the label of an \ActFlowInstruction~node that performs a database access considers several characteristics, such as which tables are being accessed and which filters are being applied over which table columns.
Some normalizations were also performed, such as swapping the branches of \ActFlowIf~nodes if the condition is a negation of some Boolean expression.
Overall, the labels were tuned with the goal of maximizing the detection of type 3 duplicates that share the same graph structure.
Figure~\ref{fig:labeled-act-flow} shows the labeled version of the logic flow from Figure~\ref{fig:act-flow-foreach}.

Duplicated \ActFlowIf and \ActFlowSwitch nodes can only be refactored if at least one of their branches is also part of the duplicated code pattern.
However, the proposed mining algorithms do not capture this kind of constraint.
To circumvent this, some post-processing is applied to the patterns, immediately after extraction, in order to discard occurrences of such nodes.
Another option would be to sacrifice generality by adding additional clauses to the \ac{MaxSAT} encoding that enforce this contraint.


As discussed in Section~\ref{sec:intro}, in earlier versions of our system, we used a graph representation more similar to \acp{PDG} that included data dependencies instead of just the syntactic structure of the logic flows.
This representation enabled the pattern miners to find semantic duplicates with significant syntactic differences, but discussions with \TargetVPL experts led to the conclusion that such duplicates were hard to analyse and understand.
For this reason, we focused on detecting type 3 duplicates with the same graph structure, but note that the proposed approach is agnostic to the graph representation and can be seamlessly applied on \acp{PDG} in order to detect syntactically dissimilar duplicated code.

Recall from Section~\ref{sec:intro} that the following requirements must be satisfied in order to provide a good user experience:
\begin{enumerate*}
    \item \label{ei:req-same-graph-struct} the graph structure of the duplicated code must be the same across its corresponding logic flows;
    \item \label{ei:req-mappings} the duplicated code detector must return the mappings of flow nodes to the duplicated code pattern in order for the tool to visually highlight the duplicated structure.
\end{enumerate*}
Most state-of-the-art detectors do not satisfy these requirements.
For example, \SIMONE~\cite{simone} applies text-based detection to Simulink models, thus losing the information needed for requirement~\ref{ei:req-mappings}.
The same applies to all non-graph based detectors for text-based languages, and even some of the graph-based like \CCGraph~\cite{ccgraph}, which performs approximate graph matching using graph kernels.
On the other hand, \ConQAT~\cite{conqat}, \eScan~\cite{escan}, \ScanQAT~\cite{scanqat} and \CCSharp~\cite{ccsharp} come close to satisfying these requirements.
However, we do not compare with these approaches for the reasons that follow.
\eScan's and \ScanQAT's source code is not publicly available.
\ConQAT's clustering step ignores the connections between nodes, thus not satisfying requirement~\ref{ei:req-same-graph-struct}.
Changing this requires replacing several list comparisons with isomorphism checks, which incurs a significant performance overhead.
Lastly, \CCSharp applies some filtering rules that are specific to \acp{PDG}.
Moreover, \CCSharp implements heuristics that prevent it from finding certain types of duplicated code patterns, such as duplicated sub-flows within large dissimilar flows or flows with dissimilar names.

\section{Conclusions and Future Work}
\label{sec:conclusion}

Duplicated code is an important form of technical debt that incurs a significant negative impact on software maintenance and evolution costs.
For this reason, for the past few decades, a large body of research has been dedicated to studying and addressing code duplication in text-based programming languages.
We propose a novel duplicated code detector for \TargetVPL that leverages the code's visual structure in order to provide helpful explanations of reported duplications.
Scalability is achieved by using an inverted index to avoid many unnecessary comparisons.
An extensive experimental evaluation carried on real-world \TargetVPL code bases show the effectiveness and scalability of the proposed solution.
This solution is currently deployed in the Architecture Dashboard\footnote{https://www.outsystems.com/platform/architecture-dashboard/}, a production static analysis tool for the \TargetVPL \ac{VPL}.

In the future, we plan to design and implement an incremental version of the pattern mining algorithm.
Incrementality has the potential to considerably reduce mining time, cutting down on computational resource costs and enabling real-time duplicated code detection.
Algorithms for mining duplicated code patterns that occur frequently within a single flow are being considered as well.
Lastly, we plan to exploit the tree structure of the patterns in order to provide a guided refactoring experience to the user, and eventually pursuit full automation of the refactoring process.

\begin{acks}
The authors would like to thank Alexandre Lemos, David Aparício and Ruben Martins for their valuable feedback and advice.
This work was supported by national funds through PT2020 with reference LISBOA-01-0247-FEDER-045309.
\end{acks}

\bibliographystyle{ACM-Reference-Format}
\balance
\bibliography{bibliography}

\end{document}